\newtheorem{theorem}{Theorem}
\newtheorem{definition}{Definition}
\newtheorem{remark}{Remark}
\begin{document}

\title{Codebook Based Hybrid Precoding for Millimeter Wave Multiuser Systems}
\author{Shiwen~He,~\IEEEmembership{Member,~IEEE}, Jiaheng~Wang,~\IEEEmembership{Senior Member,~IEEE}, Yongming~Huang,~\IEEEmembership{Member,~IEEE},  ~Bj\"{o}rn~Ottersten,~\IEEEmembership{Fellow,~IEEE}, and Wei~Hong,~\IEEEmembership{Fellow,~IEEE}
\thanks{Manuscript received Jan. 06, 2017; revised Apr. 14, 2017;  accepted Jun. 24, 2017.}
\thanks{S. He is with the State Key Laboratory of Millimeter Waves, School of Information Science and Engineering,  Southeast University, Nanjing 210096, China. (Email: shiwenhe@seu.edu.cn).}
\thanks{J. Wang is with the National Mobile Communications Research Laboratory, School of Information Science and Engineering, Southeast University, Nanjing 210096, China. (Email: jhwang@seu.edu.cn).}
\thanks{Y. Huang (Corresponding author) is with the National Mobile Communications Research Laboratory, School of Information Science and Engineering, Southeast University, Nanjing 210096, China. (Email: huangym@seu.edu.cn).}
\thanks{B. Ottersten is with the Interdisciplinary Centre for Security Reliability and Trust (SnT), University of Luxembourg, Luxembourg, and also with the Royal Institute of Technology (KTH), Stockholm, Sweden. (e-mail: bjorn.ottersten@uni.lu; bjorn.ottersten@ee.kth.seg).}
\thanks{W. Hong is with the State Key Laboratory of Millimeter Waves, School of Information Science and Engineering, Southeast University, Nanjing 210096, China. (Email: weihong@seu.edu.cn). }
}

\maketitle
\vspace{-.6 in}

\begin{abstract}
In millimeter wave (mmWave) systems, antenna architecture limitations make it difficult to apply conventional fully digital precoding techniques but call for low cost analog radio-frequency (RF) and digital baseband hybrid precoding methods. This paper investigates joint RF-baseband hybrid precoding for the downlink of multiuser multi-antenna mmWave systems with a limited number of RF chains. Two performance measures, maximizing the spectral efficiency and the energy efficiency of the system, are considered. We propose a codebook based RF precoding design and obtain the channel state information via a beam sweep procedure. Via the codebook based design, the original system is transformed into a virtual multiuser downlink system with the RF chain constraint. Consequently, we are able to simplify the complicated hybrid precoding optimization problems to joint codeword selection and precoder design (JWSPD) problems. Then, we propose efficient methods to address the JWSPD problems and jointly optimize the RF and baseband precoders under the two performance measures. Finally, extensive numerical results are provided to validate the effectiveness of the proposed hybrid precoders.
\end{abstract}

\begin{IEEEkeywords}
Hybrid precoding design, millimeter wave communication, energy efficient communication, successive convex approximation, power allocation.
\end{IEEEkeywords}

\section*{\sc \uppercase\expandafter{\romannumeral1}. Introduction}

The proliferation of multimedia infotainment applications and high-end devices (e.g., smartphones, tablets, wearable devices, laptops, machine-to-machine communication devices) causes an explosive demand for high-rate data services. Future wireless communication systems face significant challenges in improving system capacity and guaranteeing users' quality of service (QoS) experiences~\cite{CommChen2012}. In the last few years, various physical layer enhancements, such as massive multiple-input multiple-output (MIMO)~\cite{TITBjor2014}, cooperation communication~\cite{CMagTao2012}, and network densification~\cite{JSACWang2007} have been proposed. Along with these technologies, there is a common agreement that exploiting higher frequency bands, such as the millimeter wave (mmWave) frequency bands, is a promising solution to increase network capacity for future wireless networks~\cite{JSTSPHeath2016}.

MmWave communication spans a wide frequency range from $30$ GHz to $300$ GHz and thus enjoys much wider bandwidth than today's cellular systems~\cite{ProcRangan2014}. However, mmWave signals experience more severe path loss, penetration loss, and rain fading compared with signals in sub-6 GHz frequency bands. For example, the free space path loss (FSPL) at $60$ GHz frequency bands is $35.6$ dB higher than that at 1 GHz~\cite{AccessRap2013,TAPRap2013,AccessMac2015}. Such a large FSPL must be compensated by the transceiver in mmWave communication systems. Fortunately, the very small wavelength of mmWave signals enables a large number of miniaturized antennas to be packed in small dimension, thus forming a large multi-antenna system potentially providing very large array gain. In conventional multi-antenna systems, each active transmit antenna is connected to a separate transmit radio frequency (RF) chain. Although physical antenna elements are cheap, transmit RF chains are not cheap. A large number of transmit RF chains not only increase the cost of RF circuits in terms of size and hardware but also consume additional energy in wireless communication systems~\cite{JSACCui2004}. Therefore, in practice, the number of RF chains is limited and much less than the number of antennas in mmWave systems.

For ease of implementation, fully analog beamforming was proposed in~\cite{TSPZhang2005,TCOMHur2013,TSPVen2010,JSACWang2009,CSTKutty2016}, where the phase of the signal sent by each antenna is manipulated via analog phase shifters. However, pure analog precoding (with only one RF chain) cannot provide multiplexing gains for transmitting parallel data streams. Hence, joint RF-baseband hybrid precoding, aiming to achieve both diversity and multiplexing gains, has attracted a great deal of interest in both academia and industry for mmWave communications~\cite{TWCEl2014,JSTSPAlk2014,TVTXue2016,CLLiang2014,TWCAlkha2015}. El Ayach \emph{et al} in~\cite{TWCEl2014} exploited the inherent sparsity of mmWave channels to design low-complexity hybrid precoders with perfect channel state information (CSI) at the receiver and partial CSI at the transmitter (CSIT). Alkhateeb \emph{et al} further investigated channel estimation for multi-path mmWave channels and tried to improve the performance of hybrid precoding using full CSIT~\cite{JSTSPAlk2014}. Note that the hybrid precoding designs in~\cite{TWCEl2014,JSTSPAlk2014} assume that either perfect or partial CSIT is available. In practice, while using partial CSIT may degrade system performance, perfect CSIT is often difficult to obtain in mmWave communication systems, especially when there are a large number of antennas. The RF-baseband hybrid precoders in~\cite{TWCEl2014,JSTSPAlk2014} were designed to obtain the spatial diversity or multiplexing gain for point-to-point mmWave communication systems. It is well known that multiuser communications can further provide multiuser diversity~\cite{TVTXue2016,CLLiang2014,TWCAlkha2015}. In~\cite{CLLiang2014}, the authors proposed a RF precoder for multiuser mmWave systems by matching the phase of the channel of each user also under the assumption of perfect CSIT. Later, a low-complexity codebook based RF-baseband hybrid precoder was proposed for a downlink multiuser mmWave system~\cite{TWCAlkha2015}. Note that both~\cite{CLLiang2014} and~\cite{TWCAlkha2015} assume  that the number of users equals the number of RF chains. In mmWave multiuser systems, it is very likely that the number of the served users per subcarrier will be less than that of RF chains. Therefore, it is necessary to study more flexible hybrid precoding designs for multiuser mmWave communication systems.

The existing RF-baseband hybrid precoding designs focus on improving the spectral efficiency of mmWave communication systems~\cite{TWCEl2014,JSTSPAlk2014,TVTXue2016,CLLiang2014,TWCAlkha2015}. On the other hand, accompanied by the growing energy demand and increasing energy price, the system energy efficiency (EE) becomes another critical performance measure for future wireless systems~\cite{TCOMHE2013,CLNguyen2015,TWCVen2015,TSPZap2016,TCOMLI2014}. In mmWave communication systems, although reducing the number of RF chains can save power consumption, the RF-baseband hybrid architecture requires additional power to operate the phase shifting network, the splitter, and the mixer at the transceiver~\cite{AccessRial2016}. Therefore, it is also necessary to investigate the RF-baseband hybrid precoding for improving the system EE. Recently, following the idea in~\cite{TWCEl2014}, an energy efficient hybrid precoding method was developed for 5G wireless communication systems with a large number of antennas and RF chains~\cite{JSACZi2016}. Differently, in this paper, we propose a codebook based hybrid precoding method that uses the effective CSIT to design the RF-baseband precoders.

In this paper, we study the RF-baseband hybrid precoding for the downlink of a multiuser multi-antenna mmWave communication system. The hybrid precoding design takes into account two hardware limitations: (\romannumeral1)  the analog phase shifters have constant modulus and a finite number of phase choices, and (\romannumeral2) the number of transmit RF chains is limited and less than the number of antennas. The design goal is to maximize the sum rate (SR) and the EE of the system. We introduce a codebook based RF precoding design along with a beam sweep procedure to reduce the complexity of the hybrid precoder and relieve the difficulty of obtaining CSIT. The contribution of this paper are summarized as follows.
\begin{itemize}
\item We investigate joint optimization of the RF-baseband precoders in multiuser mmWave systems under two common performance measures, i.e., maximizing the SR and the EE of the system.

\item Considering the practical limitation of phase shifters, we propose a codebook based RF precoder, whose columns (i.e., RF beamforming vectors) are specified by RF codewords, and then transform the original mmWave system into a virtual multiuser downlink multiple input single output (MISO) system.

\item We propose a beam sweep procedure to obtain effective CSIT with less signaling feedback by utilizing the beam-domain sparse property of mmWave channels.

\item Based on the codebook based design, we are able to simplify the original RF-baseband hybrid precoding optimization problems into joint codeword selection and precoding design (JWSPD) problems.

\item We propose an efficient method to address the JWSPD problem for maximizing the system SR.

\item We also develop an efficient method to address the more difficult JWSPD problem for maximizing the system EE.

\item Finally, extensive numerical results are provided to verify the effects of the proposed codebook based hybrid precoding design. It is shown that the proposed method outperforms the existing methods and achieves a satisfactory performance close to that of the fully digital precoder.
\end{itemize}

The remainder of this paper is organized as follows. The system model and optimization problem formulation are described in section \uppercase\expandafter{\romannumeral2}. Section \uppercase\expandafter{\romannumeral3} introduces a codebook based mmWave RF precoding design with beam sweep. An effective joint codewords selection and precoder design method is proposed for SRmax problem in section \uppercase\expandafter{\romannumeral4}. In section \uppercase\expandafter{\romannumeral5}, an effective joint codewords selection and precoder design method is developed for EEmax problem. In section \uppercase\expandafter{\romannumeral6}, numerical evaluations of these algorithms are carried out. Conclusions are finally drawn in section \uppercase\expandafter{\romannumeral7}.

\textbf{Notations}: Bold lowercase and uppercase letters represent column vectors and matrices, respectively. The superscripts $\left(\cdot\right)^{T}$, and $\left(\cdot\right)^{H}$ represent the transpose operator, and the conjugate transpose operator, respectively. $tr\left(\cdot\right)$, $\|\cdot\|_{2}$, $\left|\cdot\right|$, $\|\cdot\|_{\mathcal{F}}$, $\Re\left(\cdot\right)$ and $\Im\left(\cdot\right)$ denote the trace, the Euclidean norm, the absolute value (element-wise absolute if used with a matrix), Frobenius norm, the real and imaginary operators, respectively. $\bm{X}\geq\bm{Y}$ and $\bm{X}\leq\bm{Y}$ denote an element-wise inequality. $\bm{A}\succeq \bm{0}$ denotes matrix $\bm{A}$ is a semidefinite positive matrix. $\bm{1}_{N\times N}$ and  $\mathds{1}_{N}$ denote respectively $N\times N$ matrix with all one entries and $N\times 1$ all-one vector. $\bm{A}\left(m,n\right)$ represents the $\left(m{\rm th},n{\rm th}\right)$ element of matrix $\bm{A}$ and $diag\left(\bm{A}\right)$ stands for a column vector whose elements are the diagonal element of the matrix $\bm{A}$.  $\mathbb{R}$ and $\mathbb{C}$ are the real number field and the complex number field, respectively. $\log\left(\cdot\right)$ is the logarithm with base $e$. The function $floor\left(x\right)$ rounds the elements of $x$ to the nearest integers less than $x$. $\rm{mod}\left(,\right)$ is the modulo operation. $\upsilon_{max}^{\left(d\right)}\left(\bm{A}\right)$ is the set of right singular vectors corresponding to the $d$ largest singular values of matrix $\bm{A}$.

\section*{\sc \uppercase\expandafter{\romannumeral2}. Problem Statement}
\subsection*{A. System Model}
Consider the downlink of a mmWave multiuser multiple-input single-output (MISO) cellular system as shown in Fig.~\ref{SharedSystemModel}, where the BS is equipped with $M$ transmit antennas and $S$ RF chains and serves $K\leq S$ single-antenna users. Different from conventional multi-antenna communication systems, e.g.,~\cite{TCOMHE2013,CLNguyen2015,TSPZap2016}, where the numbers of antennas and RF chains are equal, in mmWave systems the number of antennas could be very large and it is expensive and impractical to install an RF chain for each antenna, so in practice we often have $S\leq M$. 
\begin{figure}[h]
\centering
\captionstyle{flushleft}
\onelinecaptionstrue
\includegraphics[width=0.8\columnwidth,keepaspectratio]{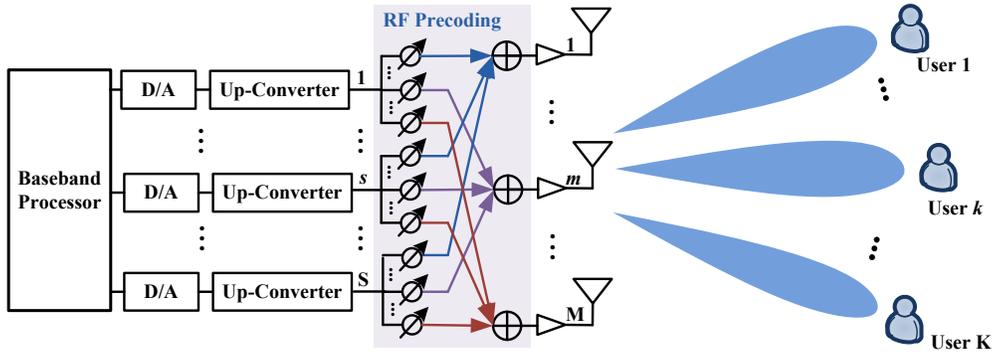}\\
\caption{Downlink mmWave system with hybrid RF-baseband precoding.}
\label{SharedSystemModel}
\end{figure}

To exploit the full potential of mmWave system with a limited number of RF chains, we consider an RF-baseband hybrid precoding design, in which the transmitted signal is precoded in both the (digital) baseband domain and the (analog) RF domain. Specifically, the system model can be expressed as
\begin{equation}\label{HybridMultiUserBeam01}
\bm{y}=\bm{H}\underline{\bm{F}}\underline{\bm{G}}\bm{s}+\bm{n},
\end{equation}
where $\bm{s}^{T}=\left[s_{1},\cdots,s_{K}\right]$ with $s_{k}\sim\mathcal{CN}\left(0,1\right)$ being the transmitted signal intended for the $k$th user, $\bm{y}=\left[y_{1},\cdots,y_{K}\right]^{T}$ with $y_{k}$ being the received signal of the $k$th user, $\bm{H}^{H}=\left[\bm{h}_{1},\cdots,\bm{h}_{K}\right]$ and $\bm{h}_{k}\in\mathds{C}^{M}$ contains the channel coefficients between the BS and the $k$th user, and $\bm{n}\sim\mathcal{CN}\left(\bm{0}, \sigma^{2}\bm{I}_{K}\right)$ is an additive white gaussian noise (AWGN) vector with independent identically distributed (i.i.d.) entries of zero mean and variance $\sigma^{2}$. In \eqref{HybridMultiUserBeam01}, $\underline{\bm{G}}\in\mathds{C}^{S\times K}$ is a baseband precoder that maps $\bm{s}$ to the $S$ RF chains, and $\underline{\bm{F}}\in\mathds{C}^{M\times S}$ is a RF precoder using analog circuitry, e.g., the analog phase shifting network. Due to the implementing limitation, the elements of $\underline{\bm{F}}$ are often required to have a constant modulus and only change their phases~\cite{TSPZhang2005}. Then, given the RF precoder $\underline{\bm{F}}$, the baseband precoder $\underline{\bm{G}}$, and the instantaneous CSI $\bm{h}_{k}, \forall k\in\mathcal{K}\triangleq\left\{1,2,\cdots,K\right\}$, the signal-to-interference-plus-noise ratio (SINR) of the $k$th user is
\begin{equation}\label{HybridMultiUserBeam02}
\underline{\mbox{SINR}}_{k}=\frac{\left|\bm{h}_{k}^{H}\underline{\bm{F}}\underline{\bm{g}}_{k}\right|^{2}}{\sum\limits_{l=1,l\neq k}^{K}\left|\bm{h}_{k}^{H}\underline{\bm{F}}\underline{\bm{g}_{l}}\right|^{2}+\sigma^{2}},
\end{equation}
where $\underline{\bm{g}}_{k}$ denotes the $k$th column of $\underline{\bm{G}}$.

\subsection*{B. Channel Model}

In this paper, the channel between the BS and each user is modeled as a narrowband clustered channel based on the extended Saleh-Valenzuela model that has been widely used in mmWave communications~\cite{IEEE802.11ad,IEEE802153c}. The channel coefficient vector $\bm{h}_{k}$ is assumed to be a sum of the contributions of $N_{cl}$ scattering clusters, each of which includes $N_{ray}$ propagation paths. Specifically, $\bm{h}_{k}$ can be written as~\cite{TWCEl2014}
\begin{equation}\label{HybridMultiUserBeam03}
\bm{h}_{k}=\sqrt{\frac{M}{N_{cl}N_{ray}}}\sum\limits_{m_{p}=1}^{N_{cl}}\sum\limits_{n_{p}=1}^{N_{ray}}\alpha_{m_{p},n_{p}}
\bm{a}\left(\phi_{m_{p},n_{p}},\theta_{m_{p},n_{p}}\right),
\end{equation}
where $\alpha_{m_{p},n_{p}}$ is a complex Gaussian random variable with zero mean and variance $\sigma_{\alpha,m_{p}}^{2}$ for the $n_{p}$th ray in the $m_{p}$th scattering cluster, and $\phi_{m_{p},n_{p}} \left(\theta_{m_{p},n_{p}}\right)$ is its azimuth (elevation) angle of departure (AoD). $\bm{a}\left(\phi_{m_{p},n_{p}},\theta_{m_{p},n_{p}}\right)$ is the normalized array response vector at an azimuth (elevation) angle of $\phi_{m_{p},n_{p}}\left(\theta_{m_{p},n_{p}}\right)$ and depends on the structure of the transmit antenna array only. The $N_{ray}$ azimuth and elevation angles of departure $\phi_{m_{p},n_{p}}$ and $\theta_{m_{p},n_{p}}$ within the cluster $m_{p}$ follow the Laplacian distributions with a uniformly-random mean cluster angle of $\phi_{m_{p}}$ and $\theta_{m_{p}}$, respectively, and a constant angular spread (standard deviation) of $\sigma_{\phi}$ and $\sigma_{\theta}$, respectively~\cite{TWCZhang2007}.

In particular, for an $M$-element uniform linear array (ULA), the array response vector is given by~\cite{AntennaTheory1997}
\begin{equation}\label{HybridMultiUserBeam04}
\bm{a}_{\rm{ULA}}\left(\phi\right)=\sqrt{\frac{1}{M}}
\left[1,e^{j\frac{2\pi}{\lambda_{s}}d\sin\left(\phi\right)},\cdots,
e^{j\left(M-1\right)\frac{2\pi}{\lambda_{s}}d\sin\left(\phi\right)}\right]^{T},
\end{equation}
where $\lambda_{s}$ is the signal wavelength, and $d$ is the inter-element spacing. For uniform planar array (UPA) in the $yz$-plane with $M_{1}$ and $M_{2}$ elements on the $y$ and $z$ axes respectively, the array response vector is given by~\cite{AntennaTheory1997}
\begin{equation}\label{HybridMultiUserBeam05}
\begin{split}
&\bm{a}_{\rm{UPA}}\left(\phi,\theta\right)=\sqrt{\frac{1}{M_{1}M_{2}}}\\
&\left[1,\cdots,e^{j\frac{2\pi}{\lambda_{s}}d\left(m_{p}\sin\left(\phi\right)\sin\left(\theta\right)+n_{p}\cos\left(\theta\right)\right)},\cdots,\atop
e^{j\frac{2\pi}{\lambda_{s}}d\left(\left(M_{1}-1\right)\sin\left(\phi\right)\sin\left(\theta\right)+\left(M_{2}-1\right)\cos\left(\theta\right)\right)}
\right]^{T},
\end{split}
\end{equation}
where the antenna array size is $M_{1}M_{2}$ and $0\leq m_{p}< M_{1}\left(0\leq n_{p}< M_{2}\right)$ is the $y\left(z\right)$ indices of an antenna element.

\subsection*{C. Problem Formulation}

The goal of this paper is to design proper RF-baseband hybrid precoders for the mmWave communication system. For this purpose, we consider two common performance measures: the system sum rate (SR) and the system energy efficiency (EE). The problem of maximizing the system SR (SRmax) is formulated as:
\begin{equation}\label{HybridMultiUserBeam06}
\begin{split}
&\max_{\underline{\bm{F}},\underline{\bm{G}}}~\sum\limits_{k=1}^{K} \underline{R}_{k},\\
s.t. &\underline{R}_{k}=\log\left(1+\underline{\mbox{SINR}}_{k}\right)\geq \gamma_{k}, \forall k\in\mathcal{K},\\
& \underline{\bm{F}}\in\mathcal{F}_{RF},~\left\|\underline{\bm{F}}\underline{\bm{G}}\right\|_{\mathcal{F}}^{2}\leq P.
\end{split}
\end{equation}
The problem of maximizing the system EE (EEmax) is formulated as:
\begin{subequations}\label{HybridMultiUserBeam07}
\begin{align}
&\max_{\underline{\bm{F}},\underline{\bm{G}}}~\frac{\sum\limits_{k=1}^{K} \underline{R}_{k}}
{\epsilon\sum\limits_{k=1}^{K}\left\|\underline{\bm{F}}\underline{\bm{g}}_{k}\right\|_{2}^2+Q_{dyn}},\label{HybridMultiUserBeam07a}\\
s.t. &\underline{R}_{k}=\log\left(1+\underline{\mbox{SINR}}_{k}\right)\geq \gamma_{k}, \forall k\in\mathcal{K},\label{HybridMultiUserBeam07b}\\
& \underline{\bm{F}}\in\mathcal{F}_{RF},~\left\|\underline{\bm{F}}\underline{\bm{G}}\right\|_{\mathcal{F}}^{2}\leq P. \label{HybridMultiUserBeam07c}
\end{align}
\end{subequations}
In the above two problems, $\mathcal{F}_{RF}$ is the set of feasible RF precoders, i.e., the set of $M\times S$ matrices with constant-modulus entries, $\gamma_{k}$ is the target rate of the $k$th user, $P$ is the maximum allowable transmit power, $\epsilon\geq 1$ is a constant which accounts for the inefficiency of the power amplifier (PA)~\cite{TWCKwanNg2012}. $Q_{dyn}$ is the dynamic power consumption, including the power radiation of all circuit blocks in each active RF chain and transmit antenna, given by
\begin{equation}\label{HybridMultiUserBeam08}
Q_{dyn}=\|\underline{\ddot{\bm{g}}}\|_{0}\left(P_{RFC}+MP_{PS}+P_{DAC}\right) +P_{sta},
\end{equation}
where $\ddot{\underline{\bm{g}}}=\left[\left\|\widetilde{\underline{\bm{g}}}_{1}\right\|_{2},\cdots,
\left\|\widetilde{\underline{\bm{g}}}_{S}\right\|_{2}\right]^{T}$ with $\widetilde{\underline{\bm{g}}}_{m}$ denoting the $m$th row of $\underline{\bm{G}}$, and the $\ell_{0}$-(quasi)norm $\|\ddot{\underline{\bm{g}}}\|_{0}$ is the number of nonzero entries of $\ddot{\underline{\bm{g}}}$, i.e., $\|\ddot{\underline{\bm{g}}}\|_{0}=\left|\left\{t: \left\|\widetilde{\underline{\bm{g}}}_{t}\right\|_{2}\neq 0\right\}\right|$. $P_{RFC}$, $P_{PS}$, and $P_{DAC}$ denote the the power consumption of the RF chain, the phase shifter (PS), and the digital-to-analog converter (DAC) at the transmitter, respectively. $P_{sta}=M\left(P_{PA}+P_{mixer}\right)+P_{BB}+P_{cool}$, where $P_{PA}$, $P_{mixer}$, $P_{BB}$, and $P_{cool}$  denote the power consumption of the PA, the mixer, the baseband signal processor, and the cooling system, respectively\footnote{The proposed framework in the paper can be readily extended to include the power consumption at the receivers.}.

The formulated problems~\eqref{HybridMultiUserBeam06} and~\eqref{HybridMultiUserBeam07} are challenging due to several difficulties, including the constant-modulus requirement of $\underline{\bm{F}}\in\mathcal{F}_{RF}$, the coupling between $\underline{\bm{G}}$ and $\underline{\bm{F}}$, the nonconvex nature of the user rates and the QoS constraints, and the fractional form of the objective (in problem \eqref{HybridMultiUserBeam07}). Another practical difficulty is the CSIT, which requires in general each user to estimate a large number of channels and feed them back to the BS. Throughout this paper, we assume that the set of user target rates is feasible. In the following, we will address these difficulties and propose efficient precoding designs.

\section*{\sc \uppercase\expandafter{\romannumeral3}. Codebook Based mmWave Precoding Design with Beam Sweeping}

In the mmWave system, the RF precoder is optimized in the analog domain and required to have a constant modulus. Unlike the digital baseband signal that can be precisely controlled, the RF signal is hard to manipulate and a precise shift for an arbitrary phase is prohibitively expensive in the analog domain. Therefore, in practice, each element of the RF precoder $\underline{\bm{F}}$ usually takes only several possible phase shifts, e.g., $8$ to $16$ choices ($3$ to $4$ bits), while the amplitude change is usually not possible~\cite{JSACWang2009,TSPZhang2005}. To facilitate the low complexity implementation of the phase shifter, the RF precoder is often selected from a predefined codebook, which contains a limited number of phase shifts with a constant amplitude.

An RF codebook can be represented by a matrix, where each column specifies a transmit pattern or an RF beamforming vector. In particular, let $\bm{F}\in\mathcal{F_{CB}}$ be an $M\times N$ predesigned codebook matrix, where $N$ is the number of codewords in the codebook $\bm{F}$, and $\mathcal{F_{CB}}$ denotes the space of all $M\times N$ constant-modulus RF precoding codewords. There are different RF codebooks, such as the general quantized beamforming codebooks and the beamsteering codebooks.

A $q$-bit resolution beam codebook for an $M$-element ULA is defined by a codebook matrix $\bm{F}$, where each column corresponds to a phase rotation of the antenna elements and generates a specific beam. A $q$-bit resolution codebook that achieves the uniform maximum gain in all directions with the optimal beamforming weight vectors is expressed as~\cite{CSTKutty2016}
\begin{equation}\label{HybridMultiUserBeam09}
\bm{F}\left(m,n\right)=\frac{1}{\sqrt{M}}j^{\frac{4\left(m-1\right)\left(n-1\right)-2N}{2^{q}}}, \forall m\in\mathcal{M}, \forall n\in\mathcal{N},
\end{equation}
where $j$ denotes the square root of $-1$, i.e.,  $j=\sqrt{-1}$, $\mathcal{M}=\left\{1,\cdots,M\right\}$, $\mathcal{N}=\left\{1,\cdots,N\right\}$.

The codebooks in IEEE 802.15.3c~\cite{IEEE802153c} and wireless personal area networks (WPAN) operating in $60$ GHz frequency band~\cite{IEEE802.11ad} are designed to simplify hardware implementation. The codebooks are generated with a $90$-degree phase resolution and without amplitude adjustment to reduce the power consumption. In this case, the $\left(m,n\right)$th element of the codebook $\bm{F}$ is given by~\eqref{HybridMultiUserBeam10}, $\forall m\in\mathcal{M}, \forall n\in\mathcal{N}$.
\begin{equation}\label{HybridMultiUserBeam10}
\bm{F}\left(m,n\right)=\frac{1}{\sqrt{M}}j^{floor\left(\frac{4\left(m-1\right)\left(\rm{mod}\left(\left(n-1\right)+\frac{N}{4},N\right)\right)}{N}\right)}.
\end{equation}
Note that when $M$ or $N$ is larger than $4$, the codebooks obtained from~\eqref{HybridMultiUserBeam10} result in the beam gain loss in some beam directions, due to the quantized phase shifts per antenna element with a limited $2$-bit codebook resolution.

In practice, discrete Fourier transform (DFT) codebooks are also widely used as they can achieve higher antenna gains at the beam directions than the codebooks in IEEE 802.15.3c. The entries of a DFT codebook are defined as
\begin{equation}\label{HybridMultiUserBeam11}
\bm{F}\left(m,n\right)\triangleq\frac{1}{\sqrt{M}}e^{-\frac{j2\pi\left(m-1\right)\left(n-1\right)}{M}}, \forall m\in\mathcal{M}, \forall n\in\mathcal{N}.
\end{equation}
The DFT codebooks generated in \eqref{HybridMultiUserBeam11} do not suffer any beam gain loss in the given beam directions for any $M$ and $N$. For mmWave systems, an efficient DFT codebook based MIMO beamforming training scheme was proposed in~\cite{ConfZhou2012} to estimate the antenna weight vectors (AWVs).

In Fig.~\ref{CodebookComparison}, we show the polar plots of array factor for two $3$-bit resolution codebooks using \eqref{HybridMultiUserBeam06} and \eqref{HybridMultiUserBeam11}, and a $2$-bit resolution codebook using \eqref{HybridMultiUserBeam10}. It can be observed that compared to the $2$-bit resolution codebook in IEEE 802.15.3c generated according to \eqref{HybridMultiUserBeam10}, the $3$-bit resolution beam codebook generated according to \eqref{HybridMultiUserBeam06} and the DFT codebook provide a better resolution and a symmetrical uniform maximum gain pattern with reduced side lobes.
\begin{figure}[h]
\centering
\captionstyle{flushleft}
\onelinecaptionstrue
\includegraphics[width=0.8\columnwidth,keepaspectratio]{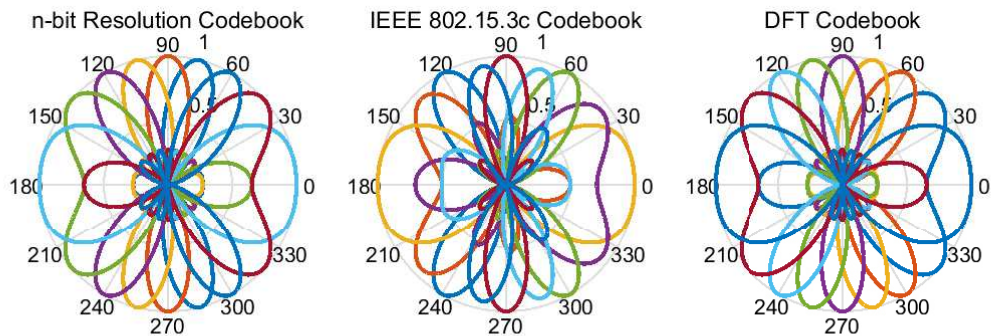}\\
\caption{Polar plots for array factor of $2$-bit and $3$-bit resolution codebooks with $8$ patterns, $M=4$, $N=8$.}
\label{CodebookComparison}
\end{figure}

Adopting an RF codebook dramatically redeuces the complexity of computing the RF precoder. Indeed, given an RF codebook $\bm{F}$, the optimization of the RF precoder $\underline{\bm{F}}$ in~\eqref{HybridMultiUserBeam06} and~\eqref{HybridMultiUserBeam07} is then equivalent to selecting $S$ codewords (columns) from the RF codebook (matrix) $\bm{F}$. Moreover, instead of obtaining directly the exact CSIT, we can obtain the equivalent CSIT via a beam-sweep procedure~\cite{IEEE802153c,IEEE802.11ad}. Specifically, during the beam-sweep procedure, the BS sends training packets from each direction defined in the RF codebook $\bm{F}$, and the users measure the received signal strength and estimate the effective channel across all directions. Then, each user provides the beam-sweep feedback to the BS, indicating the received signal strength and the effective channel of each direction, i.e., $\bm{h}_{k}^{H}\bm{f}_{n}$, where $\bm{f}_{n}$ is the $n$th codeword (column) of the RF codebook (matrix) $\bm{F}$. Such a beam-sweep procedure is shown in~Fig.~\ref{BeamformerTraining}.
\begin{figure}[h]
\centering
\captionstyle{flushleft}
\onelinecaptionstrue
\includegraphics[width=0.8\columnwidth,keepaspectratio]{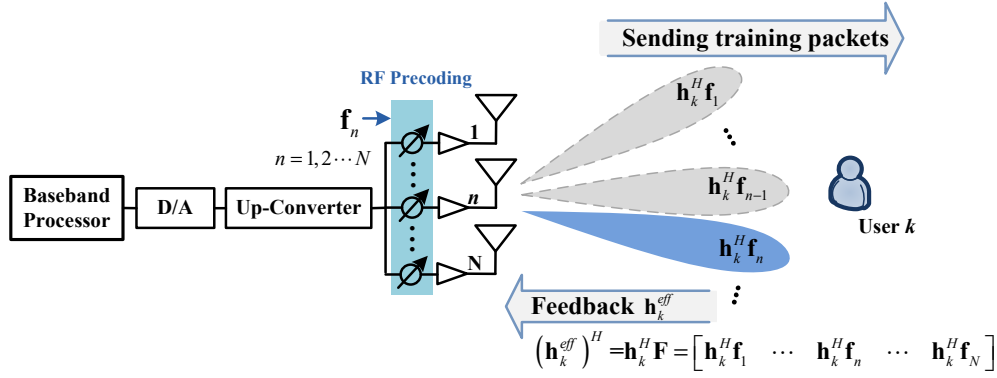}\\
\caption{Beam-sweep Procedure.}
\label{BeamformerTraining}
\end{figure}

\begin{remark}
\rm
Through the beam sweeping, the original system can be viewed as a virtual multiuser MISO downlink system, as illustrated in Fig.~\ref{EquivalentSystemModel}, where the BS is equipped with $N$ virtual antennas (i.e., codewords) and the channel coefficient between the BS and the $k$th user is $\bm{h}_{k}^{eff}=\bm{F}^{H}\bm{h}_{k}, \forall k\in\mathcal{K}$. It is well known that a mmWave channel equipped with a directional array usually admits a sparse property in the beam domain~\cite{TWCEl2014,JSTSPAlk2014}. That is, the effective channel may be near zero for most codewords $\bm{f}_{n}$ in the RF codebook $\bm{F}$. As a result, the effective channel coefficient vector $\bm{h}_{k}^{eff}$ is a sparse vector, implying that we only need to feedback a few nonzero effective channel coefficients to the BS. Therefore, by using a RF codebook along with the beam sweeping, the burden of obtaining CSIT in the mmWave system can be relieved.
\begin{figure}[h]
\centering
\captionstyle{flushleft}
\onelinecaptionstrue
\includegraphics[width=0.8\columnwidth,keepaspectratio]{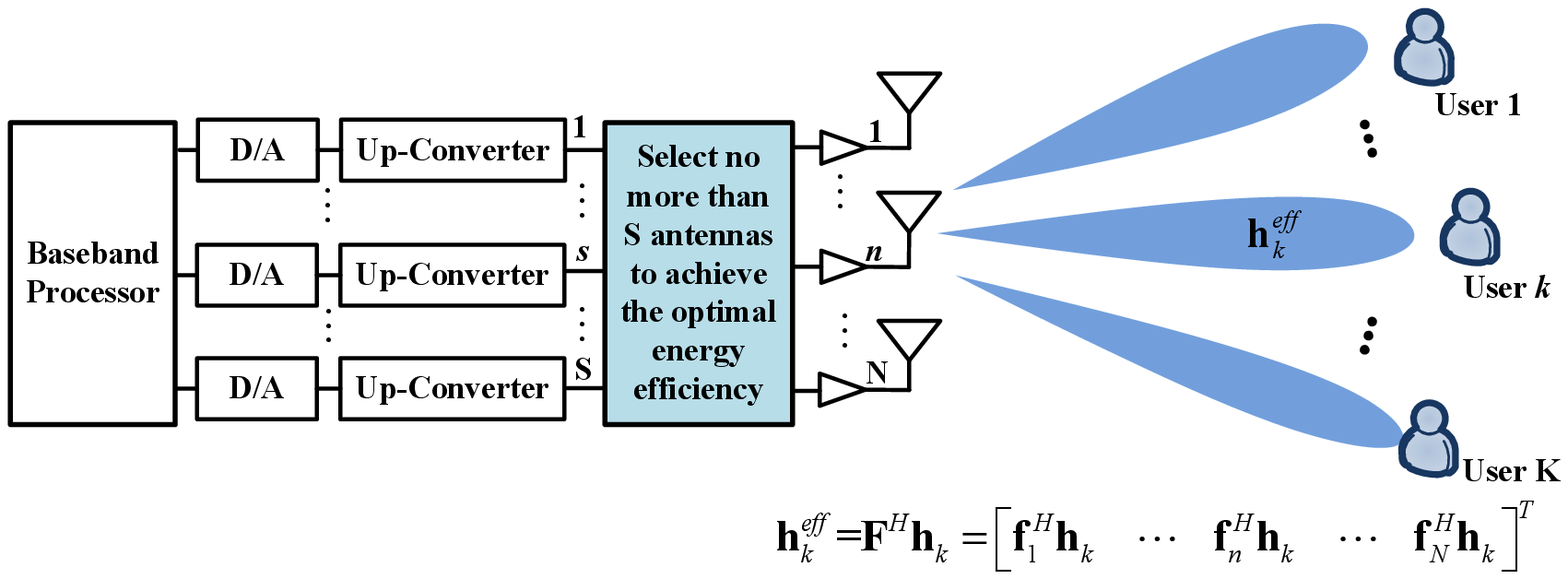}\\
\caption{Virtual Multiuser MISO Communication System.}
\label{EquivalentSystemModel}
\end{figure}
\end{remark}

Now, the hybrid precoding design becomes the joint optimization of the RF codeword selection and the baseband precoder. We show that this twofold task can be incorporated into the baseband precoder optimization. Specifically, instead of using the original $S\times K$ baseband precoder $\underline{\bm{G}}$, we introduce an expanded baseband precoder $\bm{G}\in\mathds{C}^{N\times K}$ with size of $N\times K$. Let $\widetilde{\bm{g}}_{m}$ denote the $m$th row of $\bm{G}$. Then, by multiplying the RF codebook $\bm{F}$ with $\bm{G}$, i.e., $\bm{F}\bm{G}$, the $m$th codeword in the RF codebook $\bm{F}$ is selected if and only if $\widetilde{\bm{g}}_{m}$ is nonzero or equivalently $\left\|\widetilde{\bm{g}}_{m}\right\|_{2}\neq 0$. Consequently, the original RF-baseband hybrid precoding design problem~(\ref{HybridMultiUserBeam06}) can be reformulated into the following joint codeword selection and precoder design (JWSPD) SRmax problem:
\begin{subequations}\label{HybridMultiUserBeam12}
\begin{align}
&\max_{\bm{G}}~\sum\limits_{k=1}^{K} R_{k},\label{HybridMultiUserBeam12a}\\
s.t.~&R_{k}=\log\left(1+\mbox{SINR}_{k}\right)\geq \gamma_{k}, \forall k\in\mathcal{K},\label{HybridMultiUserBeam12b}\\
&\sum\limits_{k=1}^{K}\left\|\bm{F}\bm{g}_{k}\right\|_{2}^2\leq P,~\|\ddot{\bm{g}}\|_{0}\leq S,\label{HybridMultiUserBeam12c}
\end{align}
\end{subequations}
where $\bm{g}_{k}$ denotes the $k$th column of $\bm{G}$, $\ddot{\bm{g}}=\left[\left\|\widetilde{\bm{g}}_{1}\right\|_{2},\cdots,\left\|\widetilde{\bm{g}}_{N}\right\|_{2}\right]^{T}$, the SINR of the $k$th user is given by
\begin{equation}\label{HybridMultiUserBeam13}
\mbox{SINR}_{k}=\frac{\left|\bm{h}_{k}^{H}\bm{F}\bm{g}_{k}\right|^{2}}{\sum\limits_{l=1,l\neq k}^{K}\left|\bm{h}_{k}^{H}\bm{F}\bm{g}_{l}\right|^{2}+\sigma^{2}}.
\end{equation}
In \eqref{HybridMultiUserBeam12}, the constraint $\|\ddot{\bm{g}}\|_{0}\leq S$ guarantees that the number of the selected codewords is no larger than the number of the available RF chains. Problem \eqref{HybridMultiUserBeam12} represents a sparse formulation of the baseband precoder design as $\ddot{\bm{g}}$ has up to $S\leq N$ nonzero elements. It also implies that the baseband precoder $\bm{G}$ is a sparse matrix.

Similarly, problem~\eqref{HybridMultiUserBeam07} can be reformulated into the following JWSPD EEmax problem:
\begin{subequations}\label{HybridMultiUserBeam14}
\begin{align}
&\max_{\bm{G}}~ \frac{\sum\limits_{k=1}^{K}R_{k}}
{\epsilon\sum\limits_{k=1}^{K}\left\|\bm{F}\bm{g}_{k}\right\|_{2}^2+P_{dyn}}\label{HybridMultiUserBeam14a}\\
s.t. &R_{k}=\log\left(1+\mbox{SINR}_{k}\right)\geq \gamma_{k}, \forall k\in\mathcal{K},\label{HybridMultiUserBeam14b}\\
&\sum\limits_{k=1}^{K}\left\|\bm{F}\bm{g}_{k}\right\|_{2}^2\leq P,~\|\ddot{\bm{g}}\|_{0}\leq S,\label{HybridMultiUserBeam14c}
\end{align}
\end{subequations}
where $P_{dyn}=\|\ddot{\bm{g}}\|_{0}\left(P_{RFC}+M P_{PS}+P_{DAC}\right)+P_{sta}$. Let $m_{l}$ be the row index of the $l$th nonzero row vector of $\bm{G}$ for $l=1,\cdots, \|\ddot{\bm{g}}\|_{0}$ with $m_{1}\leqslant\cdots\leqslant m_{\|\ddot{\bm{g}}\|_{0}}$. Without loss of generality, we can let the $l$th row vector of the baseband precoder be the $\widetilde{\bm{g}}_{m_{l}}$ and the $l$th phase shifter network steer vector be the $m_{l}$th codeword in the RF codebook $\bm{F}$ for the $l$th RF chain. Then, the remained $S-\|\ddot{\bm{g}}\|_{0}$ RF chains with the corresponding phase shifter networks can be turned off to save power.

So far, we have simplified the original RF-baseband hybrid precoding design into the JWSPD optimization problem. However, problems~\eqref{HybridMultiUserBeam12} and \eqref{HybridMultiUserBeam14}, although there is only one (matrix) variable $\bm{G}$, are still difficult, due to the nonconvex objective, the nonconvex QoS constraint, and the $\ell_{0}$-(quasi)norm constraint $\|\ddot{\bm{g}}\|_{0}\leqslant S$.

\section*{\sc \uppercase\expandafter{\romannumeral4}. Joint Codeword Selection and Precoder Optimization for SRmax Problem}

In this section, we consider first the JWSPD SRmax problem~\eqref{HybridMultiUserBeam12}, which, unfortunately, is NP-hard as a result of the nonconvex (sum rate) objective and the $\ell_{0}$-(quasi)norm constraint. Hence, finding its globally optimal solution requires prohibitive complexity, so in practice an efficient (probably suboptimal) solution is more preferred. In what follows, we will provide such an efficient solution.

\subsection*{A. Joint Codeword Selection and Precoder Design for SRmax problem}

To address the joint codeword selection and precoder design (JWSPD) in~\eqref{HybridMultiUserBeam12}, we first introduce some auxiliary variables $\alpha_{k}, \beta_{k}$, $\forall k\in\mathcal{K}$, $\tau$, $\kappa$, and $\chi$. Let $\log\left(1+\alpha_{k}\right)\geq\beta_{k}$ and $\mbox{SINR}_{k}\geq\alpha_{k}$, $\forall k\in\mathcal{K}$. After some basic operations, \eqref{HybridMultiUserBeam12} can be rewritten into the following equivalent form:
\begin{subequations}\label{HybridMultiUserBeam15}
\begin{align}
&\min_{\{\bm{g}_{k}, \alpha_{k}, \beta_{k}\}}~-\sum\limits_{k=1}^{K} \beta_{k}\label{HybridMultiUserBeam15a}\\
s.t.~&1+\alpha_{k}\geq e^{\beta_{k}},\forall k\in\mathcal{K},\label{HybridMultiUserBeam15b}\\
&\mbox{SINR}_{k}\geq\alpha_{k}, \mbox{SINR}_{k}\geq\overline{\gamma}_{k}, \forall k\in\mathcal{K},\label{HybridMultiUserBeam15c}\\
&\sum\limits_{k=1}^{K}\left\|\bm{F}\bm{g}_{k}\right\|_{2}^2\leq P,~\|\ddot{\bm{g}}\|_{0}\leq S\label{HybridMultiUserBeam15d},
\end{align}
\end{subequations}
where $\overline{\gamma}_{k}=e^{\gamma_{k}}-1$. It can be easily proven that the constraints~(\ref{HybridMultiUserBeam15b}) and $\mbox{SINR}_{k}\geq\alpha_{k}, \forall k$ shall be activated at the optimal solution~\cite{BoolBoyd2004}. The difficulty lies in (\ref{HybridMultiUserBeam15c}) and~\eqref{HybridMultiUserBeam15d}, as \eqref{HybridMultiUserBeam15c} and $\|\ddot{\bm{g}}\|_{0}\leq S$ are nonconvex constraints. To overcome these difficulties, we first move the constraint $\|\ddot{\bm{g}}\|_{0}\leq S$ into the objective as follows:
\begin{subequations}\label{HybridMultiUserBeam16}
\begin{align}
&\min_{\{\bm{g}_{k}, \alpha_{k}, \beta_{k}\}}~-\sum\limits_{k=1}^{K} \beta_{k}+\lambda\|\ddot{\bm{g}}\|_{0}\label{HybridMultiUserBeam16a}\\
s.t.~&1+\alpha_{k}\geq e^{\beta_{k}},\forall k\in\mathcal{K}, \sum\limits_{k=1}^{K}\left\|\bm{F}\bm{g}_{k}\right\|_{2}^2\leq P\label{HybridMultiUserBeam16b}\\
&\mbox{SINR}_{k}\geq\alpha_{k}, \mbox{SINR}_{k}\geq\overline{\gamma}_{k}, \forall k\in\mathcal{K},\label{HybridMultiUserBeam16c}
\end{align}
\end{subequations}
where $\lambda$ is a group-sparsity inducing regularization~\cite{TSPMehanna2013} to control the sparsity of the solution, i.e., the larger $\lambda$ is, the more sparse solution of (16) is.  Therefore, one can always choose a $\lambda$ large enough such that the constraint $\|\ddot{\bm{g}}\|_{0}\leq S$ is satisfied.

Then, we use the convex $\ell_{1,\infty}$-norm squared to approximate the nonconvex $\ell_{0}$-(quasi)norm\footnote{It is worth pointing out that the RF chain constraint $\|\ddot{\bm{g}}\|_{0}\leq S$ cannot be simply replaced by $\|\ddot{\bm{g}}\|_{p}\leq S$ with $p\geqslant 1$, since it is unknown whether $\ell_{0}$-norm $\geq$ $\ell_{p}$-norm or $\ell_{0}$-norm $<$ $\ell_{p}$-norm, which may result in a violation of the RF chain constraint.}. In this way, problem~\eqref{HybridMultiUserBeam16} is approximated as:
\begin{subequations}\label{HybridMultiUserBeam17}
\begin{align}
&\min_{\{\bm{g}_{k}, \alpha_{k}, \beta_{k}\}}~-\sum\limits_{k=1}^{K} \beta_{k}+\lambda\|\bm{G}\|_{1,\infty}^{2}\label{HybridMultiUserBeam17a}\\
s.t.~&1+\alpha_{k}\geq e^{\beta_{k}},\forall k\in\mathcal{K}, \sum\limits_{k=1}^{K}\left\|\bm{F}\bm{g}_{k}\right\|_{2}^2\leq P\label{HybridMultiUserBeam17b}\\
&\mbox{SINR}_{k}\geq\alpha_{k}, \mbox{SINR}_{k}\geq\overline{\gamma}_{k}, \forall k\in\mathcal{K},\label{HybridMultiUserBeam17c}
\end{align}
\end{subequations}
where $\|\bm{G}\|_{1,\infty}=\sum\limits_{n=1}^{N}\max\limits_{k}\left|\bm{g}_{k}\left(n\right)\right|$ is as the $\ell_{1,\infty}$-norm of the matrix $\bm{G}$. Note that $\|\bm{G}\|_{1,\infty}^2$ in~\eqref{HybridMultiUserBeam17a} can be rewritten as follows:
\begin{equation}\label{HybridMultiUserBeam18}
\begin{split}
\|\bm{G}\|_{1,\infty}^{2}&=\left(\sum\limits_{n=1}^{N}\max_{k}\left|\bm{g}_{k}\left(n\right)\right|\right)^2\\
&=\sum\limits_{n_{1}=1}^{N}\sum\limits_{n_{2}=1}^{N}\left(\left(\max_{k}\left|\bm{g}_{k}\left(n_{1}\right)\right|\right)
\left(\max_{k}\left|\bm{g}_{k}\left(n_{2}\right)\right|\right)\right)\\
&=\sum\limits_{n=1}^{N}\sum\limits_{m=1}^{N}
\max_{i,j\in\left\{1,\cdots,K\right\}}\left|\bm{X}_{i,j}\left(n,m\right)\right|,
\end{split}
\end{equation}
where $\bm{X}_{i,j}=\bm{g}_{i}\bm{g}_{j}^{H}$, $\forall i,j$. Note that $\bm{X}_{i,j}=\bm{g}_{i}\bm{g}_{j}^{H}$, $\forall i,j$ if and only if $\bm{X}_{i,j}\succeq \bm{0}$ and $\mbox{rank}\left(\bm{X}_{i,j}\right)=1$, $\forall i,j$. Thus, problem~\eqref{HybridMultiUserBeam17} can be relaxed to
\begin{subequations}\label{HybridMultiUserBeam19}
\begin{align}
&\min_{\{\bm{X}_{i,j}, \alpha_{k}, \beta_{k}\}}~-\sum\limits_{k=1}^{K} \beta_{k}+\lambda\|\bm{G}\|_{1,\infty}^{2},\label{HybridMultiUserBeam19a}\\
s.t.~&1+\alpha_{k}\geq e^{\beta_{k}},\forall k\in\mathcal{K}, \sum\limits_{k=1}^{K}\mbox{tr}\left(\widetilde{\bm{F}}\bm{X}_{k,k}\right)\leq P,\label{HybridMultiUserBeam19b}\\
&\mbox{SINR}_{k}\geq\alpha_{k}, \mbox{SINR}_{k}\geq\overline{\gamma}_{k},\bm{X}_{k,k}\succeq \bm{0}, \forall k\in\mathcal{K},\label{HybridMultiUserBeam19c}\\
&\mbox{rank}\left(\bm{X}_{i,j}\right)=1, \forall i,j,\label{HybridMultiUserBeam19d}
\end{align}
\end{subequations}
where $\widetilde{\bm{F}}=\bm{F}^{H}\bm{F}$, and $$\mbox{SINR}_{k}=\frac{\mbox{tr}\left(\bm{H}_{k}\bm{X}_{k,k}\right)}{\sum\limits_{l=1,l\neq k}^{K}\mbox{tr}\left(\bm{H}_{k}\bm{X}_{l,l}\right)+\sigma^{2}}$$
where $\bm{H}_{k}=\bm{F}^{H}\bm{h}_{k}\bm{h}_{k}^{H}\bm{F}$, $\forall k\in\mathcal{K}$. The relaxed problem~\eqref{HybridMultiUserBeam19} is still difficult as it is still nonconvex. Nevertheless, note that $\bm{X}_{i,j}$, $\forall i\neq j$ only appear in the objective~\eqref{HybridMultiUserBeam19a}, it is easy to have the following results which can help us simplify~\eqref{HybridMultiUserBeam19}.
\begin{theorem}\label{HybridMultiUserBeamL01}
Let $\{\breve{\bm{X}}_{i,j}, \breve{\alpha}_{k}, \breve{\beta}_{k}\}$ be the optimal solution of~(\ref{HybridMultiUserBeam19}), then the inequalities $\left|\breve{\bm{X}}_{i,j}\right|\leqslant\left|\breve{\bm{X}}_{i,i}\right|, \forall i\neq j$ hold.
\end{theorem}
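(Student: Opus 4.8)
The plan is to exploit the structural observation already flagged before the statement: for $i\neq j$ the block $\bm{X}_{i,j}$ enters problem \eqref{HybridMultiUserBeam19} only through the penalty $\lambda\|\bm{G}\|_{1,\infty}^{2}$ in the objective \eqref{HybridMultiUserBeam19a} and through the rank constraint \eqref{HybridMultiUserBeam19d}; none of the power, SINR, or positive-semidefiniteness constraints \eqref{HybridMultiUserBeam19b}--\eqref{HybridMultiUserBeam19c} involve the off-diagonal blocks. Recalling from \eqref{HybridMultiUserBeam18} that $\|\bm{G}\|_{1,\infty}^{2}=\sum_{n=1}^{N}\sum_{m=1}^{N}\max_{i,j}\left|\bm{X}_{i,j}(n,m)\right|$, the penalty is coordinatewise nondecreasing in each $\left|\bm{X}_{i,j}(n,m)\right|$. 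Since $\lambda>0$ and we are minimizing, the off-diagonal blocks should be made as small as possible while remaining rank one, and the claimed entrywise domination by the corresponding diagonal block is exactly the certificate that this has happened at the optimum.

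First I would argue by an exchange (contradiction) argument. Suppose $\{\breve{\bm{X}}_{i,j},\breve{\alpha}_{k},\breve{\beta}_{k}\}$ is optimal but $\left|\breve{\bm{X}}_{i,j}\right|\leqslant\left|\breve{\bm{X}}_{i,i}\right|$ fails for some $i\neq j$, i.e. there is an entry $(n_{0},m_{0})$ with $\left|\breve{\bm{X}}_{i,j}(n_{0},m_{0})\right|>\left|\breve{\bm{X}}_{i,i}(n_{0},m_{0})\right|$. I would then replace the block $\breve{\bm{X}}_{i,j}$ by $\breve{\bm{X}}_{i,i}$, leaving all other blocks and all $\breve{\alpha}_{k},\breve{\beta}_{k}$ untouched. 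This candidate is feasible: the diagonal blocks are unchanged so \eqref{HybridMultiUserBeam19b}--\eqref{HybridMultiUserBeam19c} still hold, and since the QoS constraint precludes a vanishing useful signal the diagonal block is a nonzero rank-one matrix $\breve{\bm{g}}_{i}\breve{\bm{g}}_{i}^{H}$, so \eqref{HybridMultiUserBeam19d} is preserved by the substitution.

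The heart of the argument is that this exchange cannot increase the objective. Fix any $(n,m)$ and let $B$ denote the maximum of $\left|\breve{\bm{X}}_{i',j'}(n,m)\right|$ taken over all pairs except $(i,j)$; because the diagonal pair $(i,i)$ is among these, $B\geqslant\left|\breve{\bm{X}}_{i,i}(n,m)\right|$. The original term is $\max\left(B,\left|\breve{\bm{X}}_{i,j}(n,m)\right|\right)$, whereas after the substitution it becomes $\max\left(B,\left|\breve{\bm{X}}_{i,i}(n,m)\right|\right)=B$, which is no larger. Summing over $(n,m)$ shows $\|\bm{G}\|_{1,\infty}^{2}$ does not increase, and since $-\sum_{k}\beta_{k}$ is unchanged the objective \eqref{HybridMultiUserBeam19a} does not increase either. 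Hence the modified point is again optimal and, by construction, satisfies $\left|\bm{X}_{i,j}\right|=\left|\breve{\bm{X}}_{i,i}\right|$. Performing this replacement for every off-diagonal block simultaneously is harmless, since the blocks interact only through the entrywise maxima, which then collapse to $\max_{i'}\left|\breve{\bm{X}}_{i',i'}(n,m)\right|$, yielding an optimal solution obeying all the desired inequalities.

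The subtle point — and the place I would be most careful — is logical rather than computational: the statement is really a ``without loss of generality'' assertion. An optimal solution may have an off-diagonal entry exceeding the diagonal at a coordinate where some third block already attains the maximum, so that entry is invisible to the objective; such a solution is optimal yet violates the inequality. The exchange argument is what repairs this, producing a dominated optimal solution, and it is precisely this dominated solution that later lets one replace $\max_{i,j}\left|\bm{X}_{i,j}(n,m)\right|$ by $\max_{i}\left|\bm{X}_{i,i}(n,m)\right|$ and eliminate the off-diagonal variables from \eqref{HybridMultiUserBeam19}. I would therefore state the conclusion as: there exists an optimal solution for which the inequalities hold, and the off-diagonal blocks may henceforth be taken dominated by the diagonal ones.
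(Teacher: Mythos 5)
Your proof is correct and belongs to the same family as the paper's argument (an exchange argument exploiting that the off-diagonal blocks enter \eqref{HybridMultiUserBeam19} only through the penalty $\lambda\|\bm{G}\|_{1,\infty}^{2}$), but it differs in two substantive ways, and both differences are improvements. The paper never proves Theorem~\ref{HybridMultiUserBeamL01} directly; the argument appears only inside the proof of Theorem~\ref{HybridMultiUserBeamL02}, where the offending off-diagonal entry $\breve{\bm{X}}_{i_{0},j_{0}}\left(n_{0},m_{0}\right)$ is simply set to zero. That surgery is legitimate for problem \eqref{HybridMultiUserBeam39}, in which the rank constraints have been dropped, but it is not literally available for problem \eqref{HybridMultiUserBeam19}: constraint \eqref{HybridMultiUserBeam19d} forces every block, including the off-diagonal ones, to have rank exactly one, and zeroing a single entry can destroy that. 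Your substitution $\bm{X}_{i,j}\leftarrow\breve{\bm{X}}_{i,i}$ sidesteps this, since $\breve{\bm{X}}_{i,i}$ already satisfies \eqref{HybridMultiUserBeam19d} by feasibility (you do not even need to invoke the QoS constraint for this). Second, the paper's contradiction hypothesis quietly assumes the violating entry dominates \emph{all} diagonal entries at its coordinate, which is strictly stronger than the negation of the stated inequality, and the strict decrease claimed in \eqref{HybridMultiUserBeam20} additionally needs that entry to be the strict coordinatewise maximum. Your ``invisibility'' observation pinpoints exactly the cases this leaves open, and your reading is right: an optimal solution can have $\left|\bm{X}_{i,j}\left(n,m\right)\right|>\left|\bm{X}_{i,i}\left(n,m\right)\right|$ whenever some third block attains the maximum at $\left(n,m\right)$, so the result is only true in the ``there exists an optimal solution'' (without loss of generality) form that you prove. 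That weaker form is all the paper actually uses, since it is precisely what justifies replacing $\max_{i,j}\left|\bm{X}_{i,j}\left(n,m\right)\right|$ by $\bm{Z}\left(n,m\right)=\max_{k}\left|\bm{X}_{k}\left(n,m\right)\right|$ in passing from \eqref{HybridMultiUserBeam19} to \eqref{HybridMultiUserBeam21}; your restatement should therefore be viewed as a correction of the theorem rather than a shortfall of your proof.
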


For brevity, let $\bm{X}_{k}=\bm{X}_{k,k}, \forall k\in\mathcal{K}$ and define $\bm{Z}\left(n,m\right)=\max\limits_{{k}\in\mathcal{K}}\left|\bm{X}_{k}\left(n,m\right)\right|, \forall m,n$. Considering that the rank one constraint is nonconvex~\cite{TSPMehanna2013}, we obtain a tractable formulation form of problem~\eqref{HybridMultiUserBeam19} by dropping the nonconvex constraints $\mbox{rank}\left(\bm{X}_{k}\right)=1$, $\forall k\in\mathcal{K}$. According to Theorem~\ref{HybridMultiUserBeamL01}, problem (\ref{HybridMultiUserBeam19}) can be relaxed to:
\begin{subequations}\label{HybridMultiUserBeam21}
\begin{align}
&\min_{\{\bm{X}_{k}, \alpha_{k}, \beta_{k}\}, \bm{Z}}~-\sum\limits_{k=1}^{K} \beta_{k}+\lambda\mbox{tr}\left(\bm{1}_{N\times N}\bm{Z}\right)\label{HybridMultiUserBeam21a}\\
s.t.~&1+\alpha_{k}\geq e^{\beta_{k}},\forall k\in\mathcal{K}, \sum\limits_{k=1}^{K}\mbox{tr}\left(\widetilde{\bm{F}}\bm{X}_{k}\right)\leq P,\label{HybridMultiUserBeam21b}\\
&\mbox{SINR}_{k}\geq\alpha_{k}, \mbox{SINR}_{k}\geq\overline{\gamma}_{k}, \forall k\in\mathcal{K},\label{HybridMultiUserBeam21c}\\
&\bm{X}_{k}\succeq \bm{0}, \bm{Z}\geq\left|\bm{X}_{k}\right|, \forall k\in\mathcal{K}.
\end{align}
\end{subequations}
To address the nonconvex constraints~\eqref{HybridMultiUserBeam21c}, we transform it into the following problem~\eqref{HybridMultiUserBeam22}, at the top of this page, by introducing auxiliary variables $\psi_{k}, \phi_{k}$, $\forall k\in\mathcal{K}$, $\tau$, $\kappa$, and $\chi$,
\begin{subequations}\label{HybridMultiUserBeam22}
\begin{align}
&\min_{\{\bm{X}_{k}, \alpha_{k}, \beta_{k}, \psi_{k}, \phi_{k}\},\bm{Z}}~-\sum\limits_{k=1}^{K} \beta_{k}+\lambda\mbox{tr}\left(\bm{1}_{N\times N}\bm{Z}\right),\label{HybridMultiUserBeam22a}\\
s.t.~&\psi_{k}^{2}\leqslant\mbox{tr}\left(\bm{H}_{k}\bm{X}_{k}\right), \bm{X}_{k}\succeq \bm{0}, 1+\alpha_{k}\geq e^{\beta_{k}},  \forall k\in\mathcal{K}\label{HybridMultiUserBeam22b}\\
&\sum\limits_{l=1,l\neq k}^{K}\mbox{tr}\left(\bm{H}_{k}\bm{X}_{l}\right)+\sigma^{2}\leqslant\phi_{k}, \forall k\in\mathcal{K}, \sum\limits_{k=1}^{K}\mbox{tr}\left(\widetilde{\bm{F}}\bm{X}_{k}\right)\leq P,\label{HybridMultiUserBeam22c}\\
&\sum\limits_{l=1,l\neq k}^{K}\overline{\gamma}_{k}\mbox{tr}\left(\bm{H}_{k}\bm{X}_{l}\right)+\overline{\gamma}_{k}\sigma^{2}
\leqslant\mbox{tr}\left(\bm{H}_{k}\bm{X}_{k}\right),\frac{\psi_{k}^{2}}{\phi_{k}}\geqslant\alpha_{k},\forall k\in\mathcal{K},\label{HybridMultiUserBeam22d}\\
&\begin{bmatrix}
\bm{Z}\left(n,m\right)-\Re\left(\bm{X}_{k}\left(n,m\right)\right)&\Im\left(\bm{X}_{k}\left(n,m\right)\right)\\
\Im\left(\bm{X}_{k}\left(n,m\right)\right)&\bm{Z}\left(n,m\right)+\Re\left(\bm{X}_{k}\left(n,m\right)\right)
\end{bmatrix}\succeq\bm{0}, \forall k\in\mathcal{K},m,n.\label{HybridMultiUserBeam22e}
\end{align}
\end{subequations}
The difficulty of solving (\ref{HybridMultiUserBeam22}) lies in (\ref{HybridMultiUserBeam22d}), as the constraints $\frac{\psi_{k}^{2}}{\phi_{k}}\geqslant\alpha_{k}, \forall k$ are nonconvex. To overcome this difficulty, we exploit the SCA method~\cite{OperalMarks1977} to approximate the inequality $\frac{\psi_{k}^{2}}{\phi_{k}}\geqslant\alpha_{k}, \forall k$ by its convex low boundary as
\begin{equation}\label{HybridMultiUserBeam23}
\frac{\psi_{k}^{2}}{\phi_{k}}\geq\Phi_{k}^{\left(I\right)}\left(\psi_{k},\phi_{k}\right)
\triangleq 2\frac{\psi_{k}^{\left(I\right)}}{\phi_{k}^{\left(I\right)}}\psi_{k}
-\left(\frac{\psi_{k}^{\left(I\right)}}{\phi_{k}^{\left(I\right)}}\right)^{2}\phi_{k}, \forall k\in\mathcal{K},
\end{equation}
where the superscript $I$ denotes the $I$th iteration of the SCA method. Note that $\Phi_{k}^{\left(I\right)}\left(\psi_{k},\phi_{k}\right)$ is in fact the first order of $\frac{\psi_{k}^{2}}{\phi_{k}}$ around the point $\left(\psi_{k}^{\left(I\right)}, \phi_{k}^{\left(I\right)}\right)$. Thus, the approximate convex problem solved at iteration $I+1$ of~\eqref{HybridMultiUserBeam22} is given by:
\begin{subequations}\label{HybridMultiUserBeam24}
\begin{align}
&\min_{\{\bm{X}_{k}, \alpha_{k}, \beta_{k}, \psi_{k}, \phi_{k}\},\bm{Z}}~-\sum\limits_{k=1}^{K} \beta_{k}+\lambda\mbox{tr}\left(\bm{1}_{N\times N}\bm{Z}\right),\label{HybridMultiUserBeam24a}\\
&s.t.~\eqref{HybridMultiUserBeam22b}, \eqref{HybridMultiUserBeam22c}, \eqref{HybridMultiUserBeam22e},\label{HybridMultiUserBeam24b}\\
& \Phi_{k}^{\left(I\right)}\left(\psi_{k},\phi_{k}\right)\geqslant\alpha_{k}, \forall k\in\mathcal{K},\label{HybridMultiUserBeam24c}\\
&\sum\limits_{l=1,l\neq k}^{K}\overline{\gamma}_{k}\mbox{tr}\left(\bm{H}_{k}\bm{X}_{l}\right)+\overline{\gamma}_{k}\sigma^{2}
\leqslant\mbox{tr}\left(\bm{H}_{k}\bm{X}_{k}\right),\forall k\in\mathcal{K},\label{HybridMultiUserBeam24d}
\end{align}
\end{subequations}
which can be solved efficiently via a modern convex solver such as MOSEK~\cite{CLNguyen2015}. For conciseness, let $\bm{\Xi}^{\left(I\right)}$ denote the set of all variables in problem~(\ref{HybridMultiUserBeam24}) at the $I$th iteration. Algorithm~\ref{HybridMultiUserBeamA01} outlines an iterative procedure for finding a solution to problem~\eqref{HybridMultiUserBeam21} (or equivalently~\eqref{HybridMultiUserBeam22}) with a fixed $\lambda$, where $\tau$ denotes the objective of problem~\eqref{HybridMultiUserBeam21}.
\begin{algorithm}
\caption{Joint Codeword Selection and Precoder Optimization with fixed $\lambda$}\label{HybridMultiUserBeamA01}
\begin{algorithmic}[1]
\STATE Let $I=0$, generate initial points $\bm{\Xi}^{\left(I\right)}$ and compute $\tau^{(I)}$.\label{HybridMultiUserBeamA0101}
\STATE Solve (\ref{HybridMultiUserBeam24}) with $\bm{\Xi}^{\left(I\right)}$, then obtain $\bm{\Xi}^{\left(*\right)}$.\label{HybridMultiUserBeamA0102}
\STATE If $\left|\tau^{(*)}-\tau^{(I)}\right|\leq\zeta$, then output $\bm{\Xi}^{\left(*\right)}$, $\tau^{(*)}$, and stop iteration.
Otherwise, $I\leftarrow I+1$, $\tau^{(I)}\leftarrow\tau^{(*)}$, $\bm{\Xi}^{\left(I\right)}\leftarrow\bm{\Xi}^{\left(*\right)}$, and go to step~\ref{HybridMultiUserBeamA0102}.\label{HybridMultiUserBeamA0103}
\end{algorithmic}
\end{algorithm}

Problem~\eqref{HybridMultiUserBeam24} consists of a linear objective function, $K\left(M^{2}+1\right)$ positive-semidefinite constraints, $5K$ linear inequality constraints, and one convex constraint. It can be solved via convex optimization methods, such the interior point method~\cite{BoolBoyd2004}. The interior point method will take $\mathcal{O}\left(\sqrt{KM}\log\left(\epsilon\right)\right)$ iterations, where the parameter $\epsilon$ represents the solution accuracy at the algorithm's termination. In each iteration, the complexity of solving~\eqref{HybridMultiUserBeam24} is $\left(\left(M^{6}+64\right)K^3+6K^{2}M^{2}\right)$~\cite{BookYe1997}. The optimal solution returned at the  $I$th iteration is also feasible for the problem at the $\left(I + 1\right)$th iteration, as a result of the approximation in~\eqref{HybridMultiUserBeam24c}. Hence, Algorithm~\ref{HybridMultiUserBeamA01} yields a nondecreasing sequence. Since the objective of problem~(\ref{HybridMultiUserBeam21}) is bounded under the limited transmit power, the convergence of Algorithm~\ref{HybridMultiUserBeamA01} is guaranteed~\cite{GlagBibby1974}. In addition, following the similar arguments in~\cite{OperalMarks1977}, it can be proved that Algorithm~\ref{HybridMultiUserBeamA01} converges to a Karush-Kunhn-Tuker (KKT) solution of problem~(\ref{HybridMultiUserBeam22})~\cite{OperMarks1978,TSPHan2013}. To obtain a good initial point $\bm{\Xi}^{\left(0\right)}$ for Algorithm~\ref{HybridMultiUserBeamA01}, one can solve the problem~\eqref{HybridMultiUserBeam25} which was extensively studied in~\cite{TSPMehanna2013}.
\begin{equation}\label{HybridMultiUserBeam25}
\min_{\{\bm{X}_{k}\}, \bm{Z}}~\mbox{tr}\left(\bm{1}_{N\times N}\bm{Z}\right)s.t.(\ref{HybridMultiUserBeam22e}),  (\ref{HybridMultiUserBeam24d}), \bm{X}_{k}\succeq \bm{0}, \forall k\in\mathcal{K}.
\end{equation}
\begin{remark}
\rm
Let $\bm{\Xi}^{\lambda}$ denote the optimal solution to problem~(\ref{HybridMultiUserBeam19}) with fixed $\lambda$. By definition, the nonzero diagonal entries of $\bm{Z}^{\lambda}$ correspond to the selected virtual antennas (codewords). If an entry of $\bm{Z}^{\lambda}$ is zero, then the corresponding entry in all $\bm{X}_{k}^{\lambda}, \forall k$ must be zero. Let $L^{\lambda}$ be the number of nonzero diagonal entries of $\bm{Z}^{\lambda}$. Then, the effective channel of the $k$th user is an $L^{\lambda}\times 1$ vector $\overline{\bm{h}}_{k}=\widehat{\bm{F}}^{H}\bm{h}_{k}$ where the columns of $\widehat{\bm{F}}$ are the $L^{\lambda}$ selected codewords from the RF codebook $\bm{F}$. Thus, the analog precoder $\underline{\bm{F}}$ is obtained as $\underline{\bm{F}}=\widehat{\bm{F}}^{H}$.
\end{remark}

\subsection*{B. Sparse Parameter for SRmax problem}

In the previous subsection, we have introduced a turnable sparse parameter $\lambda$ to control the sparsity of the solution of the JWSPD optimization. In this subsection, we investigate how to choose a proper $\lambda$ to satisfy the RF chain constraint $\|\ddot{\bm{g}}\|_{0}\leq S$. Note that in~\eqref{HybridMultiUserBeam21}, a larger $\lambda$ makes the entries of $\bm{Z}$ (as well as $\bm{X}_{k}, \forall k\in\mathcal{K}$) more sparse, implying that less RF chains are used. On the other side, to maximize the system SR and guarantee the target rate requirement of each user, one cannot force all entries of $\bm{X}_{k}, \forall k\in\mathcal{K}$ to be zero. Thus, the sparse parameter $\lambda$ has to be properly chosen to balance maximizing the system SR and minimizing the number of the selected virtual antennas (codewords).

It is not difficult to find that the system SR increases with the number of the RF chains. Therefore, the task of find the minimum $\lambda$ such that the RF chain constraint $\|\ddot{\bm{g}}\|_{0}\leq S$ is satisfied can be accomplished by the classical one-dimension search methods, such as the bisection method~\cite{BoolBoyd2004}. For completeness, the algorithm used to find the proper sparse parameter $\lambda$ such that $\|\ddot{\bm{g}}\|_{0}\leq S$ is summarized in Algorithm~\ref{HybridMultiUserBeamA05}, where $\bm{\Lambda}^{\lambda}$ and $\widetilde{\tau}^{\lambda}$ denote respectively the set of the solution of \eqref{HybridMultiUserBeam21} and the value of $\sum_{k=1}^{K}\beta_{k}$ with $\lambda$, $\bm{\Lambda}^{T}$ and $\widetilde{\tau}^{T}$ denote respectively the set of the temporary solution of \eqref{HybridMultiUserBeam21} and the temporary value of $\sum_{k=1}^{K}\beta_{k}$ with $\lambda$. Note that the initialization of Algorithm~\ref{HybridMultiUserBeamA05} can also be finished by solving~(\ref{HybridMultiUserBeam25}).
\begin{algorithm}
\caption{SRmax Optimization for Hybrid JWSPD}\label{HybridMultiUserBeamA05}
\begin{algorithmic}[1]
\STATE Generate initial points $\bm{\Lambda}^{0}$ via solving (\ref{HybridMultiUserBeam25}), and computer $\widetilde{\tau}^{T}$. Let $flag=1$.\label{HybridMultiUserBeamS0501}
\WHILE{$flag$}
\STATE Let $\lambda=\frac{\lambda_{L}+\lambda_{U}}{2}$. 
\STATE Solve (\ref{HybridMultiUserBeam21}) with $\lambda$ and Algorithm~\ref{HybridMultiUserBeamA01}, then obtain $\bm{\Lambda}^{\lambda}$ and $\widetilde{\tau}^{\lambda}$.\label{HybridMultiUserBeamS0502}
\STATE If $L^{\lambda}>S$, let $\lambda_{L}=\lambda$, otherwise, let $\lambda_{U}=\lambda$, $\widetilde{\tau}^{\lambda}\leftarrow\sum\limits_{k=1}^{K} \beta_{k}^{\lambda}$.\label{HybridMultiUserBeamS0503}
\STATE If $\left|\widetilde{\tau}^{\lambda}-\widetilde{\tau}^{T}\right|\leq\zeta$ and $L^{\lambda}\leq S$, then let $flag=0$ and output $\bm{\Lambda}^{\lambda}$. Otherwise, $\bm{\Lambda}^{T}\leftarrow\bm{\Lambda}^{\lambda}$,
$\widetilde{\tau}^{T}\leftarrow\widetilde{\tau}^{\lambda}$.\label{HybridMultiUserBeamS0504}
\ENDWHILE
\end{algorithmic}
\end{algorithm}

\subsection*{C. Refined Solution for SRmax Problem}

Recall that in the previous subsections, the $\ell_{0}$(quasi)-norm has been approximated by the mixed $\ell_{1,\infty}$-norm squared to obtain a tractable solution. In addition, due to dropping the nonconvex rank constraint in~\eqref{HybridMultiUserBeam19}, the solution $\bm{X}_{k}, \forall k\in\mathcal{K}$ obtained by solving~\eqref{HybridMultiUserBeam21} may not be rank one. Thus, the solution provided by~\eqref{HybridMultiUserBeam21} has to be refined to fit the original problem~\eqref{HybridMultiUserBeam15}. For this purpose, after obtaining an approximate solution to~(\ref{HybridMultiUserBeam19}), we propose to solve a size-reduced SRmax problem as the last step, omitting the antennas corresponding to the zero diagonal entries of the approximated sparse solution $\bm{Z}$. The size-reduced SRmax problem is given by:
\begin{subequations}\label{HybridMultiUserBeam26}
\begin{align}
&\max_{\left\{\overline{\bm{g}}_{k}\right\}}~\sum\limits_{k=1}^{K} \overline{R}_{k},\label{HybridMultiUserBeam26a}\\
s.t.~&\overline{\mbox{SINR}}_{k}\geq \overline{\gamma}_{k}, \forall k\in\mathcal{K},\label{HybridMultiUserBeam26b}\\
&\sum\limits_{k=1}^{K}\left\|\widehat{\bm{F}}\overline{\bm{g}}_{k}\right\|_{2}^2\leq P,\label{HybridMultiUserBeam26c}
\end{align}
\end{subequations}
where $\overline{R}_{k}=\log\left(1+\overline{\mbox{SINR}}_{k}\right)$, and $\overline{\mbox{SINR}}_{k}$ is given by
\begin{equation}\label{HybridMultiUserBeam27}
\overline{\mbox{SINR}}_{k}\triangleq\frac{\left\|\overline{\bm{h}}_{k}^{H}\overline{\bm{g}}_{k}\right\|_{2}^{2}}{\sum\limits_{l=1,l\neq k}^{K}\left\|\overline{\bm{h}}_{k}^{H}\overline{\bm{g}}_{l}\right\|_{2}^{2}+\sigma^{2}}.
\end{equation}
Similarly, the size-reduced SRmax problem~\eqref{HybridMultiUserBeam26} can be equivalently reformulated as:
\begin{subequations}\label{HybridMultiUserBeam28}
\begin{align}
&\max_{\left\{\overline{\bm{g}}_{k},\overline{\alpha}_{k}, \overline{\beta}_{k}, \overline{\phi}_{k}\right\}}~\sum\limits_{k=1}^{K}\overline{\beta}_{k},\label{HybridMultiUserBeam28a}\\
s.t.&1+\overline{\alpha}_{k}\geqslant e^{\overline{\beta}_{k}}, \forall k\in\mathcal{K}, \sum\limits_{k=1}^{K}\left\|\widehat{\bm{F}}\overline{\bm{g}}_{k}\right\|_{2}^2\leq P, \label{HybridMultiUserBeam28b}\\
&\frac{{\left\|\overline{\bm{h}}_{k}^{H}\overline{\bm{g}}_{k}\right\|_{2}^{2}}}{\overline{\phi}_{k}}\geq \overline{\gamma}_{k}, \forall \frac{{\left\|\overline{\bm{h}}_{k}^{H}\overline{\bm{g}}_{k}\right\|_{2}^{2}}}{\overline{\phi}_{k}}\geq \overline{\alpha}_{k}, \forall k\in\mathcal{K}, \label{HybridMultiUserBeam28c}\\ %
&\sum\limits_{l=1,l\neq k}^{K}\left\|\overline{\bm{h}}_{k}^{H}\overline{\bm{g}}_{l}\right\|_{2}^{2}
+\sigma^{2}\leqslant\overline{\phi}_{k}, \forall k\in\mathcal{K}.\label{HybridMultiUserBeam28d}
\end{align}
\end{subequations}
Similar to the problem~\eqref{HybridMultiUserBeam22}, \eqref{HybridMultiUserBeam28} is also a nonconvex problem due to the constraints in~\eqref{HybridMultiUserBeam28c}. For \eqref{HybridMultiUserBeam28c}, we have the following the convex low boundary:
\begin{equation}\label{HybridMultiUserBeam29}
\begin{split}
&\frac{\left\|\overline{\bm{h}}_{k}^{H}\overline{\bm{g}}_{k}\right\|_{2}^{2}}
{\overline{\phi}_{k}}\geq\overline{\Phi}_{k}^{\left(I\right)}\left(\overline{\bm{g}}_{k},\overline{\phi}_{k}\right)
\triangleq\\
& \frac{2\Re\left(\left(\overline{\bm{g}}_{k}^{\left(I\right)}\right)^{H}\overline{\bm{h}}_{k}\overline{\bm{h}}_{k}^{H}
\overline{\bm{g}}_{k}\right)}{\overline{\phi}_{k}^{\left(I\right)}}
-\left(\frac{\left\|\overline{\bm{h}}_{k}^{H}\overline{\bm{g}}_{k}^{\left(I\right)}\right\|_{2}}
{\overline{\phi}_{k}^{\left(I\right)}}\right)^{2}\overline{\phi}_{k}, \forall k\in\mathcal{K},
\end{split}
\end{equation}
where $I$ denotes the $I$th iteration. Thus, the constraints in~\eqref{HybridMultiUserBeam28c} can be approximated as:
\begin{equation}\label{HybridMultiUserBeam30}
\overline{\Phi}_{k}^{\left(I\right)}\left(\overline{\bm{g}}_{k},\overline{\phi}_{k}\right)\geq \overline{\gamma}_{k},  \overline{\Phi}_{k}^{\left(I\right)}\left(\overline{\bm{g}}_{k},\overline{\phi}_{k}\right)\geq \overline{\alpha}_{k}, \forall k\in\mathcal{K}.
\end{equation}
Consequently we can obtain a stationary solution to~(\ref{HybridMultiUserBeam28}), by solving the following series of convex problems:
\begin{equation}\label{HybridMultiUserBeam31}
\max_{\left\{\overline{\bm{g}}_{k},\overline{\alpha}_{k}, \overline{\beta}_{k}, \overline{\phi}_{k}\right\}}~\sum\limits_{k=1}^{K}\overline{\beta}_{k},~
s.t.~\eqref{HybridMultiUserBeam28b}, \eqref{HybridMultiUserBeam28d}, \eqref{HybridMultiUserBeam30}.
\end{equation}
Such an iterative procedure is outlined in Algorithm~\ref{HybridMultiUserBeamA03}, where $\overline{\bm{\Xi}}^{\left(I\right)}$ and $\overline{\tau}^{\left(I\right)}$ denote the set of the solution and the objective value of problem~(\ref{HybridMultiUserBeam31}) at the $I$th iteration, respectively. The convergence property of Algorithm~\ref{HybridMultiUserBeamA03} is similar with that of Algorithm~\ref{HybridMultiUserBeamA01}. The computational complexity of Algorithm~\ref{HybridMultiUserBeamA03} is about $\mathcal{O}\left(M^{4}K^{4}\right)$~\cite{CLNguyen2015}.
\begin{algorithm}
\caption{Transmit Beamforming Optimization}\label{HybridMultiUserBeamA03}
\begin{algorithmic}[1]
\STATE Let $I=0$, generate initial points $\overline{\bm{\Xi}}^{\left(I\right)}$ and compute $\overline{\tau}^{(I)}$.\label{HybridMultiUserBeamA0301}
\STATE Solve (\ref{HybridMultiUserBeam31}) with $\overline{\bm{\Xi}}^{\left(I\right)}$, then obtain $\overline{\bm{\Xi}}^{\left(*\right)}$.\label{HybridMultiUserBeamA0302}
\STATE If $\left|\overline{\tau}^{(*)}-\overline{\tau}^{(I)}\right|\leq\zeta$, then output $\overline{\bm{\Xi}}^{\left(*\right)}$ and stop iteration. Otherwise, $I\leftarrow I+1$, $\overline{\tau}^{(I)}\leftarrow\overline{\tau}^{(*)}$, $\overline{\bm{\Xi}}^{\left(I\right)}\leftarrow\overline{\bm{\Xi}}^{\left(*\right)}$, and go to step~\ref{HybridMultiUserBeamA0302}.\label{HybridMultiUserBeamA0303}
\end{algorithmic}
\end{algorithm}

In the next,  we investigate how to obtain a good initial point for Algorithm~\ref{HybridMultiUserBeamA03}. Let $\overline{\bm{g}}_{k}=\sqrt{q_{k}}\widehat{\bm{g}}_{k}$, $\forall k\in\mathcal{K}$\footnote{It is easy to find that (\ref{HybridMultiUserBeam34}) is a weighted sum power minimization problem which can be regarded as an extension of  the conventional power minimization problem.}. We propose to use the solution of the following problem as the initial point:
\begin{equation}\label{HybridMultiUserBeam32}
\min_{\left\{q_{k}, \widehat{\bm{g}}_{k}\right\}}\sum\limits_{k=1}^{K}q_{k}\widehat{\bm{g}}_{k}^{H}\widehat{\bm{F}}^{H}\widehat{\bm{F}}\widehat{\bm{g}}_{k}~s.t.~\overline{\mbox{SINR}}_{k}\geq \overline{\gamma}_{k}, \left\|\widehat{\bm{g}}_{k}\right\|_{2}^{2}=1, \forall k\in\mathcal{K}.
\end{equation}
We can show that problem~(\ref{HybridMultiUserBeam32}) is dual to the following virtual uplink problem~\cite{TVTSchubert2004,TCOMHe2015}:
\begin{equation}\label{HybridMultiUserBeam33}
\min_{\left\{p_{k}, \widehat{\bm{g}}_{k}\right\}}\sigma^{2}\sum\limits_{k=1}^{K}p_{k}~s.t.~\overleftarrow{\mbox{SINR}}_{k}\geq \overline{\gamma}_{k}, \left\|\widehat{\bm{g}}_{k}\right\|_{2}^{2}=1, \forall k\in\mathcal{K},
\end{equation}
where $\widehat{\bm{g}}_{k}$ can be regarded as the combiner of the dual uplink channel, $p_{k}$ has the interpretation of being the dual uplink power $k$th user in the virtual uplink, and $\overleftarrow{\mbox{SINR}}_{k}$ is given by
\begin{equation}\label{HybridMultiUserBeam34}
\overleftarrow{\mbox{SINR}}_{k}\triangleq\frac{p_{k}\left\|\overline{\bm{h}}_{k}^{H}\widehat{\bm{g}}_{k}\right\|_{2}^{2}}{\sum\limits_{l=1,l\neq k}^{K}p_{l}\left\|\overline{\bm{h}}_{l}^{H}\widehat{\bm{g}}_{k}\right\|_{2}^{2}+\widehat{\bm{g}}_{k}^{H}\widehat{\bm{F}}^{H}\widehat{\bm{F}}\widehat{\bm{g}}_{k}}.
\end{equation}
Furthermore, when the optimal solutions of problems~(\ref{HybridMultiUserBeam32}) and (\ref{HybridMultiUserBeam33}) are obtained, we have $\sum\limits_{k=1}^{K}q_{k}\widehat{\bm{g}}_{k}^{H}\widehat{\bm{F}}^{H}\widehat{\bm{F}}\widehat{\bm{g}}_{k}=\sigma^{2}\sum\limits_{k=1}^{K}p_{k}$. It was shown in~\cite{TITZhang2012,TCOMHe2015,TWCSadek2007} that the solution $\left\{\widehat{\bm{g}}_{k}\right\}$ of (\ref{HybridMultiUserBeam33}) is given by
\begin{equation}\label{HybridMultiUserBeam35}
\widehat{\bm{g}}_{k}^{*}\propto \mbox{max. eigenvector} \left(\left(\sum\limits_{l=1,l\neq k}^{K}p_{l}\overline{\bm{H}}_{l}+\widehat{\bm{F}}^{H}\widehat{\bm{F}}\right)^{-1}\overline{\bm{H}}_{k}\right).
\end{equation}
Thus, the algorithm used to solve (\ref{HybridMultiUserBeam33}) is summarized in Algorithm~\ref{HybridMultiUserBeamA04} with provable convergence~\cite{TSPWiessel2006}.
\begin{algorithm}
\caption{Transmit Beamforming Initialization}\label{HybridMultiUserBeamA04}
\begin{algorithmic}[1]
\STATE Initialize beamforming vector $\left\{\widehat{\bm{g}}_{k}\right\}$.\label{HybridMultiUserBeamA0401}
\STATE Optimize $\left\{p_{k}\right\}$ by first finding the fixed-point $p_{k}^{*}$ of the following equation by iterative function evaluation:\label{HybridMultiUserBeamA0402}
\begin{equation}
p_{k}^{*}=\overline{\gamma}_{k}\frac{\sum\limits_{l=1,l\neq k}^{K}p_{l}\left\|\overline{\bm{h}}_{l}^{H}\widehat{\bm{g}}_{k}\right\|_{2}^{2}
+\widehat{\bm{g}}_{k}^{H}\widehat{\bm{F}}^{H}\widehat{\bm{F}}\widehat{\bm{g}}_{k}}{\left\|\overline{\bm{h}}_{k}^{H}\widehat{\bm{g}}_{k}
\right\|_{2}^{2}} \nonumber
\end{equation}
\STATE Find the optimal uplink beamformers based on the optimal uplink power allocation $p_{k}^{*}$ with (\ref{HybridMultiUserBeam36}).\label{HybridMultiUserBeamA0403}
\STATE Repeat steps \ref{HybridMultiUserBeamA0402} and \ref{HybridMultiUserBeamA0403} until convergence.
\end{algorithmic}
\end{algorithm}

To find $\left\{q_{k}\right\}$ in terms of $\left\{\widehat{\bm{g}}_{k}\right\}$ that is obtained from the virtual uplink channel, i.e., (\ref{HybridMultiUserBeam35}), we note that the SINR constraints in (\ref{HybridMultiUserBeam32}) must be all actived at the global optimum point. So
\begin{equation}\label{HybridMultiUserBeam36}
q_{k}=\sum\limits_{l=1,l\neq k}^{K}q_{l}\frac{\overline{\gamma}_{k}}{\left\|\overline{\bm{h}}_{k}^{H}\widehat{\bm{g}}_{k}\right\|_{2}^{2}}
\left\|\overline{\bm{h}}_{k}^{H}\widehat{\bm{g}}_{l}\right\|_{2}^{2}+\sigma^{2}
\frac{\overline{\gamma}_{k}}{\left\|\overline{\bm{h}}_{k}^{H}\widehat{\bm{g}}_{k}\right\|_{2}^{2}}, \forall k\in\mathcal{K}.
\end{equation}
Thus, we obtain a set of $K$ linear equations with $K$ unknowns $\left\{q_{k}\right\}$, which can be solved as
\begin{equation}\label{HybridMultiUserBeam37}
\bm{q}=\bm{\Psi}\bm{G}\bm{q}+\sigma^{2}\bm{\Psi}\mathds{1}_{K},
\end{equation}
where $\bm{q}=\left[q_{1},\cdots,q_{K}\right]^{T}$, $\bm{\Psi}=diag\left\{\frac{\overline{\gamma}_{1}}{\left\|\overline{\bm{h}}_{1}^{H}\widehat{\bm{g}}_{1}\right\|_{2}^{2}},\cdots,
\frac{\overline{\gamma}_{K}}{\left\|\overline{\bm{h}}_{K}^{H}\widehat{\bm{g}}_{K}\right\|_{2}^{2}}\right\}$, $\bm{G}\left(k,k\right)=0$ and $\bm{G}\left(k,l\right)=\left\|\overline{\bm{h}}_{k}^{H}\widehat{\bm{g}}_{l}\right\|_{2}^{2}$ for $k\neq l$. Defining an extended power vector $\widetilde{\bm{q}}=\left[\bm{q}^{T},1\right]^{T}$ and an extended coupling matrix
\begin{equation}\label{HybridMultiUserBeam38}
\bm{Q}=\begin{bmatrix}
\bm{\Psi}\bm{G}&\bm{\Psi}\mathds{1}_{K}\\
\frac{1}{P_{max}}\bm{a}^{T}\bm{\Psi}\bm{G}& \frac{1}{P_{max}}\bm{a}^{T}\bm{\Psi}\mathds{1}_{K}
\end{bmatrix}.
\end{equation}
where $P_{max}=\sigma^{2}\sum\limits_{k=1}^{K}p_{k}$, $\bm{a}^{T}=\left[a_{1},\cdots,a_{K}\right]$, $a_{k}=\widehat{\bm{g}}_{k}^{H}\widehat{\bm{F}}^{H}\widehat{\bm{F}}\widehat{\bm{g}}_{k}, \forall k$.
According to the conclusions in~\cite{TVTSchubert2004}, we can easily obtain the optimal power vector $\bm{q}$ as the first $K$ components of the dominant eigenvector of $\bm{Q}$, which can be scaled such that its last component equals one. The solution for $\left\{q_{k}\right\}$, combined with that for $\left\{\widehat{\bm{g}}_{k}\right\}$, gives an explicit solution of the beamforming vector $\left\{\overline{\bm{g}}_{k}\right\}$ via an virtual uplink channel. Once the beamforming vector $\left\{\overline{\bm{g}}_{k}\right\}$ is obtained, the baseband beamforming vector $\underline{\bm{g}}_{k}$ is obtained, as $\underline{\bm{g}}_{k}=\left[\left\{\overline{\bm{g}}_{k}\right\}^{T},\bm{0}_{\left(S-L^{\lambda}\right),1}^{T}\right]^{T}$. In fact, the remaining $S-L^{\lambda}$ RF chains with the corresponding phase shifter networks can be turned off to improve the system EE.

\section*{\sc \uppercase\expandafter{\romannumeral5}. Joint Codeword Selection and Precoder Optimization for EEmax Problem}

In this section, we consider the EEmax problem~\eqref{HybridMultiUserBeam14}, which is more difficult than the SRmax problem. Indeed the objective in~\eqref{HybridMultiUserBeam14} is given by a more complex fractional form, and the $\ell_{0}$-(quasi)norm appears not only in the constraint but also in the denominator of the objective. To find the globally optimal solution to~(\ref{HybridMultiUserBeam14}) requires an exhaustive search over all  $\sum\limits_{l=L_{Min}}^{S}\left(N\atop l\right)$ possible sparse patterns of $\ddot{\bm{g}}$, where $L_{Min}\leqslant S$ is the minimum number of the selected RF chains that can achieve the target rate requirement of each user under the power constraint. Unfortunately, for each pattern of $\ddot{\bm{g}}$, (\ref{HybridMultiUserBeam14}) is an NP-hard problem. Thus, we seek a practical and efficient method to address the EEmax problem~\eqref{HybridMultiUserBeam14}.

\subsection*{A. Joint Codeword Selection and Precoder Design for EEmax problem}

Similarly, we first use the convex squared $\ell_{1,\infty}$-norm to approximate the nonconvex $\ell_{0}$-(quasi)norm in the power consumption term $P_{dyn}$. Then, we also introduce a turnable sparse parameter $\lambda\geq 0$ as a group-sparsity inducing regularization to control the sparsity of the solution so that the RF chain constraint~\eqref{HybridMultiUserBeam14c} can be temporarily omitted for fixed $\lambda$. By doing so, problem~\eqref{HybridMultiUserBeam14} can be relaxed as:

\begin{subequations}\label{HybridMultiUserBeam39}
\begin{align}
&\max_{\{\bm{X}_{i,j}\}}~\frac{\sum\limits_{k=1}^{K} R_{k}}
{\epsilon\sum\limits_{k=1}^{K}\mbox{tr}\left(\widetilde{\bm{F}}\bm{X}_{k,k}\right)+P_{dyn}\left(\lambda\right)},\label{HybridMultiUserBeam39a}\\
s.t.~&\mbox{SINR}_{k}\geq \overline{\gamma}_{k}, \forall k,
~\sum\limits_{k=1}^{K}\mbox{tr}\left(\widetilde{\bm{F}}\bm{X}_{k,k}\right)\leq P,\label{HybridMultiUserBeam39b}\\
&\bm{X}_{k,k}\succeq \bm{0}, \forall k\in\mathcal{K},
\end{align}
\end{subequations}
where the nonconvex $\mbox{rank}\left(\bm{X}_{i,j}\right)=1$, $\forall i,j$ constraints are dropped, and the dynamic power consumption is given by
\begin{equation}\label{HybridMultiUserBeam40}
P_{dyn}\left(\lambda\right)=f\left(\lambda\right)\sum\limits_{n=1}^{N}\sum\limits_{m=1}^{N}
\max_{i,j\in\left\{1,\cdots,K\right\}}\left|\bm{X}_{i,j}\left(n,m\right)\right|+P_{sta},
\end{equation}
where $f\left(\lambda\right)=P_{RFC}+M P_{PS}+P_{DAC}+\lambda$. Note that $\bm{X}_{i,j}$, $\forall i\neq j$, only appear in the power consumption item $P_{dyn}\left(\lambda\right)$. Therefore, similar to Theorem~\ref{HybridMultiUserBeamL01}, we have the following result.
\begin{theorem}\label{HybridMultiUserBeamL02}
Let $\breve{\bm{X}}_{i,j}, \forall i,j\in\mathcal{K}$ be the optimal solution of~(\ref{HybridMultiUserBeam39}), then $\breve{\bm{X}}_{i,i}, \forall i\in\mathcal{K}$ with $\bm{X}_{i,j}=\bm{0}$, $\forall i\neq j, i,j\in\mathcal{K}$ is also the optimal solution of~(\ref{HybridMultiUserBeam39}).
\end{theorem}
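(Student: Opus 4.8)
The plan is to mirror the argument behind Theorem~\ref{HybridMultiUserBeamL01}, using the structural observation already noted before the statement: the off-diagonal blocks $\bm{X}_{i,j}$ with $i\neq j$ enter problem~\eqref{HybridMultiUserBeam39} \emph{only} through the penalty $P_{dyn}\left(\lambda\right)$. First I would verify this isolation explicitly: the numerator $\sum_{k}R_{k}$, the transmit-power term $\epsilon\sum_{k}\mbox{tr}\left(\widetilde{\bm{F}}\bm{X}_{k,k}\right)$, the SINR constraints~\eqref{HybridMultiUserBeam39b}, the power budget, and the semidefinite constraints $\bm{X}_{k,k}\succeq\bm{0}$ all depend solely on the diagonal blocks $\bm{X}_{k,k}$; the sole appearance of $\bm{X}_{i,j}$ for $i\neq j$ is inside the term $\sum_{n,m}\max_{i,j}\left|\bm{X}_{i,j}\left(n,m\right)\right|$ of $P_{dyn}\left(\lambda\right)$ in~\eqref{HybridMultiUserBeam40}.

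Next, starting from an optimizer $\{\breve{\bm{X}}_{i,j}\}$ with optimal value $v^{\star}$, I would construct the candidate $\widehat{\bm{X}}_{i,i}=\breve{\bm{X}}_{i,i}$ and $\widehat{\bm{X}}_{i,j}=\bm{0}$ for all $i\neq j$. Feasibility is immediate since every constraint of~\eqref{HybridMultiUserBeam39} involves only the (unaltered) diagonal blocks. For the objective, the numerator and the transmit-power term are unchanged, while zeroing the off-diagonals confines the inner maximization to the diagonal blocks, so that entry-wise $\max_{i,j}\left|\widehat{\bm{X}}_{i,j}\left(n,m\right)\right|=\max_{i}\left|\breve{\bm{X}}_{i,i}\left(n,m\right)\right|\leq\max_{i,j}\left|\breve{\bm{X}}_{i,j}\left(n,m\right)\right|$; hence $P_{dyn}\left(\lambda\right)$, and therefore the whole denominator, does not increase.

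Finally, because the denominator always contains the strictly positive constant $P_{sta}$ it is positive, while the numerator $\sum_{k}R_{k}$ is nonnegative; a non-increasing denominator can thus only (weakly) increase the ratio. The candidate therefore attains an objective value at least $v^{\star}$, and since $v^{\star}$ is the maximum it must equal $v^{\star}$, so the point obtained by zeroing all off-diagonal blocks is itself optimal, which is the assertion.

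I do not anticipate a genuine obstacle; the only points needing care are confirming that the relaxation~\eqref{HybridMultiUserBeam39} imposes positive semidefiniteness only on the diagonal blocks and couples no off-diagonal block to the constraints, and checking the monotonicity step---that a smaller denominator yields a no-smaller ratio---which is valid precisely because the denominator is positive and the numerator is nonnegative.
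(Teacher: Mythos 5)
Your proof is correct, and it is a genuinely more direct route than the one in the paper. Both arguments rest on the same structural fact—that the off-diagonal blocks $\bm{X}_{i,j}$, $i\neq j$, of problem~\eqref{HybridMultiUserBeam39} appear only inside the penalty term of $P_{dyn}\left(\lambda\right)$ in~\eqref{HybridMultiUserBeam40}—but they exploit it differently. The paper proceeds in two steps, mirroring Theorem~\ref{HybridMultiUserBeamL01}: it first proves, by contradiction (zeroing a single offending entry and showing the denominator strictly decreases), the dominance inequality $\left|\breve{\bm{X}}_{i,j}\right|\leqslant\left|\breve{\bm{X}}_{i,i}\right|$ for all $i\neq j$, and then concludes that setting $\bm{X}_{i,j}=\bm{0}$ leaves $P_{dyn}\left(\lambda\right)$ \emph{exactly unchanged}, hence preserves optimality. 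You bypass the dominance inequality entirely: zeroing the off-diagonals can only shrink the entry-wise maxima, so the denominator weakly decreases while the numerator and all constraints (which involve only the diagonal blocks) are untouched; since the denominator is positive (it contains $P_{sta}$) and the numerator nonnegative, the ratio can only weakly increase, and optimality of the original point forces equality. What the paper's longer route buys is the stronger structural characterization of \emph{every} optimizer (its off-diagonal entries are dominated by the diagonal ones), which parallels and reuses the statement of Theorem~\ref{HybridMultiUserBeamL01}; what your route buys is brevity and robustness—you need only weak monotonicity of the fractional objective in its denominator, with no contradiction argument and no entry-by-entry surgery.
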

\begin{proof}
First, we prove that the inequalities $\left|\breve{\bm{X}}_{i,j}\right|\leqslant\left|\breve{\bm{X}}_{i,i}\right|, \forall i\neq j$ hold. Suppose that there is one pair of indices $\left(i_{0},j_{0}\right), i_{0}\neq j_{0}$ and $\left(n_{0},m_{0}\right)$ such that $\left|\breve{\bm{X}}_{i,j}\left(n,m\right)\right|\leqslant\left|\breve{\bm{X}}_{i,i}\left(n,m\right)\right|, \forall i\neq j,  n, m$ except for $\left|\breve{\bm{X}}_{i_{0},j_{0}}\left(n_{0},m_{0}\right)\right|\geq
\left|\breve{\bm{X}}_{i,i}\left(n_{0},m_{0}\right)\right|, \forall k$. Let $\overline{\bm{X}}_{i,j}, \forall i,j\in\mathcal{K}$ be another solution obtained by letting $\overline{\bm{X}}_{i,j}\left(n,m\right)=\breve{\bm{X}}_{i,j}\left(n,m\right), \forall i, j, n, m$ except for $\overline{\bm{X}}_{i_{0},j_{0}}\left(n_{0},m_{0}\right)=0$. Note that $\bm{X}_{i,j},\forall i\neq j$ only appear in the constraints~\eqref{HybridMultiUserBeam39b}. Thus, $\overline{\bm{X}}_{i,j}, \forall i,j\in\mathcal{K}$ is a feasible solution to problem~(\ref{HybridMultiUserBeam39}) and satisfies the following inequality~\eqref{HybridMultiUserBeam20}.
\begin{equation}\label{HybridMultiUserBeam20}
\begin{split}
\breve{P}_{dyn}\left(\lambda\right)&=f\left(\lambda\right)\left(\left|\breve{\bm{X}}_{i_{0},j_{0}}\left(n_{0},m_{0}\right)\right|+
\underbrace{\sum\limits_{n=1}^{N}\sum\limits_{m=1}^{N}}_{ \left(n,m\right)\neq\left(n_{0},m_{0}\right)}\max_{i}\left|\breve{\bm{X}}_{i,i}\left(n,m\right)\right|\right)+P_{sta}\\
&>\overline{P}_{dyn}\left(\lambda\right)=f\left(\lambda\right)\sum\limits_{n=1}^{N}\sum\limits_{m=1}^{N}
\max_{i}\left|\overline{\bm{X}}_{i,i}\left(n,m\right)\right|+P_{sta}.
\end{split}
\end{equation}
Note that $\breve{\bm{X}}_{i,j}$ and $\overline{\bm{X}}_{i,j}$, $\forall i,j\in\mathcal{K}$ achieve the same user rate. Combining the objective of problem~\eqref{HybridMultiUserBeam39} and~\eqref{HybridMultiUserBeam20}, we can obtain a better objective by using $\overline{\bm{X}}_{i,j}, \forall i,j\in\mathcal{K}$ than using $\breve{\bm{X}}_{i,j}, \forall i,j\in\mathcal{K}$, which is a contradiction. Therefore, we have $\left|\breve{\bm{X}}_{i,j}\right|\leqslant\left|\breve{\bm{X}}_{i,i}\right|, \forall i\neq j$.

Note that $\bm{X}_{i,j}$, $\forall i\neq j$, only appear in the power consumption item $P_{dyn}\left(\lambda\right)$. Combining $\left|\breve{\bm{X}}_{i,j}\right|\leqslant\left|\breve{\bm{X}}_{i,i}\right|, \forall i\neq j$ with~\eqref{HybridMultiUserBeam40}, one can easily see that the power consumption item $P_{dyn}\left(\lambda\right)$ dose not change by setting $\bm{X}_{i,j}=\bm{0}$, $\forall i\neq j$. Consequently, $\breve{\bm{X}}_{i,i}, \forall i\in\mathcal{K}$ with $\bm{X}_{i,j}=\bm{0}$, $\forall i\neq j$ are still optimal.
\end{proof}

Theorem~\ref{HybridMultiUserBeamL02} also indicates that we can simplify problem~\eqref{HybridMultiUserBeam39} by setting $\bm{X}_{i,j}=\bm{0}$, $\forall i\neq j$ without any loss of optimality. Hence, similar to the transformation between~\eqref{HybridMultiUserBeam19} and~\eqref{HybridMultiUserBeam21}, (\ref{HybridMultiUserBeam39}) is equivalent to
\begin{subequations}\label{HybridMultiUserBeam41}
\begin{align}
&\max_{\{\bm{X}_{k}\}, \bm{Z}}~\frac{\sum\limits_{k=1}^{K} R_{k}}
{\epsilon\sum\limits_{k=1}^{K}\mbox{tr}\left(\widetilde{\bm{F}}\bm{X}_{k}\right)+P_{dyn}\left(\bm{Z},\lambda\right)},\label{HybridMultiUserBeam41a}\\
s.t.~&\mbox{SINR}_{k}\geq \overline{\gamma}_{k}, \forall k\in\mathcal{K},
~\sum\limits_{k=1}^{K}\mbox{tr}\left(\widetilde{\bm{F}}\bm{X}_{k}\right)\leq P,\label{HybridMultiUserBeam41b}\\
&\bm{X}_{k}\succeq \bm{0}, \bm{Z}\geq\left|\bm{X}_{k}\right|, \forall k\in\mathcal{K},
\end{align}
\end{subequations}
where $P_{dyn}\left(\bm{Z},{\lambda}\right)=f\left(\lambda\right)\mbox{tr}\left(\bm{1}_{N\times N}\bm{Z}\right)+P_{sta}$. Introducing auxiliary variables $\alpha_{k}, \beta_{k}, \psi_{k}, \phi_{k}$, $\forall k\in\mathcal{K}$, $\tau$, $\kappa$, and $\chi$, (\ref{HybridMultiUserBeam41}) can be equivalently rewritten as
\begin{subequations}\label{HybridMultiUserBeam42}
\begin{align}
&\max_{\{\bm{X}_{k}, \alpha_{k}, \beta_{k}, \psi_{k}, \phi_{k}\},\bm{Z}, \tau, \kappa, \chi}~\chi,\label{HybridMultiUserBeam42a}\\
&s.t.~\frac{\tau^{2}}{\kappa}\geqslant\chi,  \frac{\psi_{k}^{2}}{\phi_{k}}\geqslant\alpha_{k}, \forall k\in\mathcal{K}\label{HybridMultiUserBeam42b}\\
&\sum\limits_{k=1}^{K} \beta_{k}\geqslant\tau^2, \eqref{HybridMultiUserBeam22b}, \eqref{HybridMultiUserBeam22c}, \eqref{HybridMultiUserBeam22e}\label{HybridMultiUserBeam42c}\\
&\epsilon\sum\limits_{k=1}^{K}\mbox{tr}\left(\widetilde{\bm{F}}\bm{X}_{k}\right)
+P_{dyn}\left(\bm{Z},{\lambda}\right)\leqslant\kappa\label{HybridMultiUserBeam42d}\\
&\sum\limits_{l=1,l\neq k}^{K}\overline{\gamma}_{k}\mbox{tr}\left(\bm{H}_{k}\bm{X}_{l}\right)+\overline{\gamma}_{k}\sigma^{2}
\leqslant\mbox{tr}\left(\bm{H}_{k}\bm{X}_{k}\right),\forall k\in\mathcal{K}\label{HybridMultiUserBeam42e}
\end{align}
\end{subequations}
Similarly, the difficulty of solving (\ref{HybridMultiUserBeam42}) lies in (\ref{HybridMultiUserBeam42b}), as the two constraints in~(\ref{HybridMultiUserBeam42b}) are nonconvex. Thus, we exploit the SCA method~\cite{OperalMarks1977} to approximate the two inequalities in~\eqref{HybridMultiUserBeam42b} by two convex constraints. By replacing \eqref{HybridMultiUserBeam42b} with the convex lower bounds at the $I$th  iteration, problem~\eqref{HybridMultiUserBeam42} can be approximated by the following convex program:
\begin{subequations}\label{HybridMultiUserBeam43}
\begin{align}
&\max_{\{\bm{X}_{k}, \alpha_{k}, \beta_{k}, \psi_{k}, \phi_{k}, \bm{\mu}_{k}\},\bm{Z}, \tau, \kappa, \chi}~\chi,\label{HybridMultiUserBeam43a}\\
s.t.~&(\ref{HybridMultiUserBeam42c}), (\ref{HybridMultiUserBeam42d}), \eqref{HybridMultiUserBeam42e}, \label{HybridMultiUserBeam43b}\\
&\Psi^{\left(I\right)}\left(\tau,\kappa\right)\geqslant\chi,  \Phi_{k}^{\left(I\right)}\left(\psi_{k},\phi_{k}\right)\geqslant\alpha_{k}, \forall k\in\mathcal{K},\label{HybridMultiUserBeam43c}
\end{align}
\end{subequations}
where $\Psi^{\left(I\right)}\left(\tau,\kappa\right)\triangleq
2\frac{\tau^{\left(I\right)}}{\kappa^{\left(I\right)}}\tau
-\left(\frac{\tau^{\left(I\right)}}{\kappa^{\left(I\right)}}\right)^{2}\kappa$. Thus, problem~\eqref{HybridMultiUserBeam43} can be solved via the similar procedure as described in Algorithm~\ref{HybridMultiUserBeamA01}.

\subsection*{B. Sparse Parameter for EEmax problem}

Similarly, a larger $\lambda$ leads to a more sparse solution to the (approximated) EEmax problem~\eqref{HybridMultiUserBeam39}, which corresponds to less RF chains used. On the other side, $\lambda$ cannot be infinite, which would lead to a zero solution and contradict the task of maximizing the system EE. Hence, $\lambda$ has to be properly chosen. However, unlike to the SRmax problem~\eqref{HybridMultiUserBeam21} or the total power minimization problem with RF chain constraints~\cite{TSPMehanna2013}, the system EE is not monotonic with respect to the number of RF chains or the sparse parameter $\lambda$. Indeed, the system EE is a piecewise function with respect to the sparse parameter $\lambda$, as illustrated in Fig.~\ref{Chi&LambdaComparison} and Table~\ref{Chi&LambdaComparisonList}. Consequently, the bisection method cannot be used to optimize $\lambda$~\cite{BoolBoyd2004}.

To address the above issue, we devise a dynamic interval compression method to search a suitable $\lambda$. Specifically, let $\mathcal{A}_{Min}$ be the set of the indices of the $L_{Min}$ selected virtual antennas (codewords). Let $L_{Max}$ be the number of the virtual antennas (codewords) achieving the maximum EE by ignoring the available RF chain constraint, which correspondes to $\lambda=0$, and $\mathcal{A}_{Max}$ be the set of the indices of the $L_{Max}$ selected virtual antennas (codewords) in this case. Considering that the allowable number of RF chains is a discrete value but the sparse parameter $\lambda$ is continuous, we introduce the following definition.
\begin{definition}\label{HybridMultiUserBeamD01}
For any small positive number $\epsilon$ and $\forall L\in\left\{L_{Min},\cdots, L_{Max}-1\right\}$, $\lambda_{L}^{key}$ is called a breaking point if the optimal solutions $\bm{Z}^{\lambda_{L}^{key}-\varepsilon}$ and $\bm{Z}^{\lambda_{L}^{key}}$ of~(\ref{HybridMultiUserBeam41}) have $L+1$ and $L$ nonzero diagonal entries, respectively.
\end{definition}
\begin{theorem}\label{HybridMultiUserBeamL03}
Let $\bm{\Xi}^{\lambda}$ be the solution of~(\ref{HybridMultiUserBeam41})  with $\forall \lambda\in\left[\lambda_{L+1}^{key},\lambda_{L}^{key}\right)$ and $L$, $L+1\in\left\{L_{Min}, L_{Min}+1,\cdots, L_{Max}\right\}$. Then, $\bm{Z}^{\lambda}$  has also $L+1$ nonzero diagonal entries and the inequality $\chi^{\lambda}\leqslant\chi^{\lambda_{L+1}^{key}}$ holds.
\end{theorem}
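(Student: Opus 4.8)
The plan is to prove the two assertions separately, since they rest on different mechanisms. The inequality $\chi^{\lambda}\leqslant\chi^{\lambda_{L+1}^{key}}$ is a pure monotonicity statement about the optimal value of~\eqref{HybridMultiUserBeam41} and is the easy half; the claim that $\bm{Z}^{\lambda}$ carries exactly $L+1$ nonzero diagonal entries throughout the interval is the substantive half and reduces to showing that the number of active codewords is a nonincreasing function of $\lambda$. In both halves the key device is to compare the optimizers obtained at two parameter values $\lambda_{1}<\lambda_{2}$ and exploit that $\lambda$ enters problem~\eqref{HybridMultiUserBeam41} only through the denominator.

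First I would settle the value inequality. The feasible set of~\eqref{HybridMultiUserBeam41}, namely the QoS/SINR constraints~\eqref{HybridMultiUserBeam41b}, the power budget, and $\bm{X}_{k}\succeq\bm{0}$, $\bm{Z}\geq\left|\bm{X}_{k}\right|$, is independent of $\lambda$; the parameter appears only via $P_{dyn}\left(\bm{Z},\lambda\right)=f\left(\lambda\right)\mbox{tr}\left(\bm{1}_{N\times N}\bm{Z}\right)+P_{sta}$ with $f\left(\lambda\right)=P_{RFC}+MP_{PS}+P_{DAC}+\lambda$ strictly increasing. Hence for any fixed feasible point the objective is nonincreasing in $\lambda$, because $\mbox{tr}\left(\bm{1}_{N\times N}\bm{Z}\right)\geqslant 0$ enlarges the denominator while the numerator $\sum_{k}R_{k}$ is $\lambda$-free. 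Taking $\lambda_{1}=\lambda_{L+1}^{key}\leqslant\lambda_{2}=\lambda$ and evaluating the optimizer for $\lambda_{2}$ at the parameter $\lambda_{1}$, which is legitimate since it stays feasible, gives $\chi^{\lambda}\leqslant$ its value at $\lambda_{1}$, and that value is at most $\chi^{\lambda_{L+1}^{key}}$ by optimality at $\lambda_{1}$. This yields $\chi^{\lambda}\leqslant\chi^{\lambda_{L+1}^{key}}$ for every $\lambda\in\left[\lambda_{L+1}^{key},\lambda_{L}^{key}\right)$, and notably does not require the support to be constant.

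Next I would attack the count. Writing $L\left(\lambda\right)$ for the number of nonzero diagonal entries of $\bm{Z}^{\lambda}$, the plan is to show $L\left(\lambda\right)$ is nonincreasing and then invoke Definition~\ref{HybridMultiUserBeamD01}: it gives $L\left(\lambda_{L+1}^{key}\right)=L+1$ and $L\left(\lambda_{L}^{key}-\varepsilon\right)=L+1$ for all small $\varepsilon>0$, while $\lambda_{L+1}^{key}<\lambda_{L}^{key}$ are consecutive breaking points so no drop occurs strictly inside the interval; monotonicity then squeezes $L+1\geqslant L\left(\lambda\right)\geqslant L+1$ on the whole interval, forcing $L\left(\lambda\right)=L+1$. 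For the monotonicity lemma I would run an interchange argument on the two optimality relations. Let $N_{i}$, $T_{i}=\mbox{tr}\left(\bm{1}_{N\times N}\bm{Z}^{\lambda_{i}}\right)$ and $D_{i}=\epsilon\sum_{k}\mbox{tr}\left(\widetilde{\bm{F}}\bm{X}_{k}^{\lambda_{i}}\right)+P_{sta}$ be the numerator, surrogate, and $\lambda$-free part of the denominator at the two optimizers. Cross-multiplying the optimality of solution $1$ at $\lambda_{1}$ and of solution $2$ at $\lambda_{2}$ and adding, every $\lambda$-free term cancels and one is left with $\left(f\left(\lambda_{1}\right)-f\left(\lambda_{2}\right)\right)\left(N_{1}T_{2}-N_{2}T_{1}\right)\geqslant 0$, i.e.\ $T_{2}/N_{2}\leqslant T_{1}/N_{1}$: the penalized surrogate-to-rate ratio decreases with $\lambda$.

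The hard part will be bridging this surrogate-ratio monotonicity to the discrete count $L\left(\lambda\right)$, because $\mbox{tr}\left(\bm{1}_{N\times N}\bm{Z}\right)$ is only a convex $\ell_{1,\infty}$-type proxy for the number of active rows and the two need not move in lockstep. I would close the gap by arguing at the breaking points rather than analyzing the nonconvex support map directly: if $L\left(\lambda\right)$ were to increase at some interior $\lambda$, its richer support, carried back to $\lambda_{L+1}^{key}$ where it remains feasible, would contradict either the optimality established there or the defining property of $\lambda_{L+1}^{key}$ as the exact value at which the count first settles at $L+1$. Anchoring the count claim to the already-proven value monotonicity and to the one-sided support stability built into Definition~\ref{HybridMultiUserBeamD01}, rather than to a direct sensitivity analysis of the support, is the delicate step I expect to demand the most care.
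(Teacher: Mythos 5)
Your proof of the inequality $\chi^{\lambda}\leqslant\chi^{\lambda_{L+1}^{key}}$ is correct and is essentially the paper's own argument in different packaging: the paper assumes $\chi^{\lambda}>\chi^{\lambda_{L+1}^{key}}$ and derives the contradiction chain~\eqref{HybridMultiUserBeam44}, whose last step is exactly your move of re-evaluating the optimizer $\bm{\Xi}^{\lambda}$ at the cheaper penalty $f\left(\lambda_{L+1}^{key}\right)\leqslant f\left(\lambda\right)$ over the $\lambda$-free feasible set; you run the same mechanism directly instead of by contradiction.

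The genuine gap is in the support-count half. Note first that the paper does not prove this half either: it reads it off Definition~\ref{HybridMultiUserBeamD01}, treating breaking points as, by construction, the only values of $\lambda$ at which the number of nonzero diagonal entries of $\bm{Z}^{\lambda}$ changes, so that between the consecutive breaking points $\lambda_{L+1}^{key}$ and $\lambda_{L}^{key}$ the count is constant and equal to the value $L+1$ pinned at both ends of the interval. You instead set out to prove, as an independent lemma, that the count $L\left(\lambda\right)$ is nonincreasing, and this is where your argument fails. Your cross-multiplication/interchange computation is correct algebra, but what it delivers is monotonicity of the ratio of $\mbox{tr}\left(\bm{1}_{N\times N}\bm{Z}^{\lambda}\right)$ to the achieved sum rate --- a statement about the convex surrogate, not about the number of nonzero diagonal entries; as you yourself concede, the two need not move in lockstep. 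Your proposed repair does not close this: carrying a feasible point with a richer support back to $\lambda_{L+1}^{key}$ contradicts nothing, because the existence of a \emph{feasible} point with more active rows is perfectly consistent with the \emph{optimizer} at $\lambda_{L+1}^{key}$ having $L+1$ active rows, and Definition~\ref{HybridMultiUserBeamD01} constrains only the optimizers at the breaking points and just below them, not at interior values of $\lambda$. So your plan still needs count monotonicity as an unproven ingredient. The intended reading of the theorem is that the count statement is essentially definitional --- built into how the breaking points are defined and searched for --- while the $\chi$ inequality is the substantive claim, and that claim you did prove correctly.
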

\begin{proof}
Following the definition of the breaking point $\lambda_{L+1}^{key}$, it is easy to see that $\bm{Z}^{\lambda}$ has also $L+1$ nonzero diagonal entries. If $\chi^{\lambda}>\chi^{\lambda_{L+1}^{key}}$, recalling $\lambda_{L+1}^{key}\leqslant\lambda$, then we have~\eqref{HybridMultiUserBeam44},
\begin{equation}\label{HybridMultiUserBeam44}
\begin{split}
&\chi^{\lambda_{L+1}^{key}}=\frac{\left(\tau^{\lambda_{L+1}^{key}}\right)^{2}}{\kappa^{\lambda_{L+1}^{key}}}
=\frac{\left(\tau^{\lambda_{L+1}^{key}}\right)^{2}}{\epsilon\sum\limits_{k=1}^{K}\mbox{tr}
\left(\widetilde{\bm{F}}\bm{X}_{k}^{\lambda_{L+1}^{key}}\right)+f\left(\lambda_{L+1}^{key}\right)\mbox{tr}\left(\bm{1}_{N\times N}\bm{Z}^{\lambda_{L+1}^{key}}\right)+P_{sta}}\\
&<\chi^{\lambda}=\frac{\left(\tau^{\lambda}\right)^{2}}{\kappa^{\lambda}}
=\frac{\left(\tau^{\lambda}\right)^{2}}{\epsilon\sum\limits_{k=1}^{K}\mbox{tr}
\left(\widetilde{\bm{F}}\bm{X}_{k}^{\lambda}\right)+f\left(\lambda\right)\mbox{tr}\left(\bm{1}_{N\times N}\bm{Z}^{\lambda}\right)+P_{sta}}\\
&<\frac{\left(\tau^{\lambda}\right)^{2}}{\epsilon\sum\limits_{k=1}^{K}\mbox{tr}
\left(\widetilde{\bm{F}}\bm{X}_{k}^{\lambda}\right)+f\left(\lambda_{L+1}^{key}\right)\mbox{tr}\left(\bm{1}_{N\times N}\bm{Z}^{\lambda}\right)+P_{sts}},
\end{split}
\end{equation}
which contradicts the fact that $\bm{\Xi}^{\lambda_{L+1}^{key}}$ is the optimal solution to problem~(\ref{HybridMultiUserBeam43}) with fixed $\lambda_{L+1}^{key}$. Thus, the conclusions given in Theorem~\ref{HybridMultiUserBeamL03} are proven.
\end{proof}

According the definition of the breaking point and the non-monotonic property of the system EE with respect to $\lambda$, one shall find the values of all breaking points. Let $\bm{Z}^{\lambda_{i}}$ be the solution of problem~\eqref{HybridMultiUserBeam41} with fixed $\lambda_{i}$, $i=1,2$. Let $\bm{Z}^{\lambda}$ be the solution of problem~\eqref{HybridMultiUserBeam41} for $\forall \lambda\in\left[\lambda_{1},\lambda_{2}\right]$. Theorem~\ref{HybridMultiUserBeamL03} implies that if $\bm{Z}^{\lambda_{1}}$ and $\bm{Z}^{\lambda_{2}}$ have the same number of the nonzero diagonal entries, $\bm{Z}^{\lambda_{1}}$, $\bm{Z}^{\lambda_{2}}$, and $\bm{Z}^{\lambda}$ have the same number of the nonzero diagonal entries. Based on this result, we propose a one-dimension dynamic interval compression method, which is summarized in Algorithm~\ref{TurningParaOptimization}, to find a suitable sparse parameter $\lambda$ and obtain the corresponding codewords. Note that in Algorithm~\ref{TurningParaOptimization}, $\mathcal{A}^{\lambda}$ denotes the set of the indices of the selected virtual antennas (codewords) with fixed $\lambda$ and $\varrho^{\lambda}$ is calculated as
$$\frac{\sum\limits_{k=1}^{K}\log\left(1+\frac{tr\left(\bm{H}_{k}\bm{X}_{k}\right)}
{\sum\limits_{l=1, l\neq k}^{K}tr\left(\bm{H}_{k}\bm{X}_{l}\right)+\sigma_{k}^{2}}\right)}
{\xi\sum\limits_{k=1}^{K}tr\left(\bm{X}_{k}\right)
+L^{\lambda}\left(P_{RCF}+M P_{PS}+P_{DAC}\right)+P_{sta}}.
$$
The initialization of Algorithm~\ref{TurningParaOptimization} can also be obtained by solving~(\ref{HybridMultiUserBeam25}) and letting other constraints to be activated. In addition, $\lambda_{U}$ should be large enough such that the number of the active RF chains equals to $L_{Min}$.
According to Theorem~\ref{HybridMultiUserBeamL03}, if two intervals in the intervals set $\mathcal{I}$ have an intersection in the intervals set $\mathcal{I}$, they shall be combined into one interval, for example $\left[100,150\right]$ and $\left[150,200\right]$ are combined to $\left[100,200\right]$.

\begin{algorithm}
\caption{RF Chains Set Generation: Part \uppercase\expandafter{\romannumeral1}}\label{TurningParaOptimization}
\begin{algorithmic}[1]
\STATE Initialize $\bm{\Xi}^{Opt}$, $\mathcal{L}_{\lambda}=\emptyset$, $\mathcal{A}=\emptyset$, $L_{c}=0$, $L_{0}=S$, $L_{1}=L_{Min}$, $Flag=1$.
\STATE Let $\lambda_{L}=0$, solve (\ref{HybridMultiUserBeam15}) with Algorithm~\ref{HybridMultiUserBeamA01} and $\lambda_{L}$, obtain $\bm{\Xi}^{\lambda_{L}}$, $L^{\lambda_{L}}$, and $\varrho^{(L)}$.
\STATE Let $\lambda_{U}$ be a larger positive number, solve (\ref{HybridMultiUserBeam15}) with Algorithm~\ref{HybridMultiUserBeamA01} and $\lambda_{U}$, obtain $\bm{\Xi}^{\lambda_{U}}$, $L^{\lambda_{U}}$, and $\varrho^{(U)}$.
\STATE Let $\lambda_{Temp}=0$, $L_{Temp}=L^{\lambda_{L}}$, and $\mathcal{I}=\emptyset$ be a set of intervals.
\IF{$L^{\lambda_{L}}\leqslant S$}
\STATE $\mathcal{L}=\mathcal{L}\cup\left\{\left(L^{\lambda_{L}},\lambda_{L}\right)\right\}$, $L_{c}=L_{c}+1$, $L_{0}=L^{\lambda_{L}}$.
\STATE Let $\mathcal{A}_{L^{\lambda_{L}}}$ be the set of the selected codewords indices set obtained by $\bm{\Xi}^{\lambda_{L}}$, $\mathcal{A}
=\mathcal{A}\cup\left\{\mathcal{A}_{L^{\lambda_{L}}}\right\}$.
\ENDIF
\IF{$L^{\lambda_{U}}\leqslant S$}
\STATE $\mathcal{L}=\mathcal{L}\cup\left\{\left(L^{\lambda_{U}},\lambda_{U}\right)\right\}$, $L_{c}=L_{c}+1$, $L_{1}=L^{\lambda_{U}}$.
\STATE Let $\mathcal{A}_{L^{\lambda_{U}}}$ be the set of the selected codewords indices set obtained by $\bm{\Xi}^{\lambda_{U}}$, $\mathcal{A}
=\mathcal{A}\cup\left\{\mathcal{A}_{L^{\lambda_{U}}}\right\}$.
\ENDIF
\IF{$L_{0}==L_{1} ~||~L_{c}==L_{0}-L_{1}+1$}
\STATE $Flag=0$.
\ENDIF
\WHILE{$Flag$}
\STATE Running RF Chains Set Generation: Part \uppercase\expandafter{\romannumeral2}.
\STATE Running RF Chains Set Generation: Part \uppercase\expandafter{\romannumeral3}.
\ENDWHILE
\end{algorithmic}
\end{algorithm}

\setcounter{algorithm}{4}
\begin{algorithm}
\caption{RF Chains Set Generation: Part \uppercase\expandafter{\romannumeral2}}\label{TurningParaOptimization}
\begin{algorithmic}[1]
\IF{$L^{\lambda_{L}}-L^{\lambda_{U}}>1$}
\FOR{$l=1,\cdots,L^{\lambda_{L}}-L^{\lambda_{U}}-1$}
\STATE $\lambda=\lambda_{L}+\frac{\lambda_{U}-\lambda_{L}}{L^{\lambda_{L}}-L^{\lambda_{U}}}l$.
\IF{$\lambda$ is not in any interval of $\mathcal{I}$}  
\STATE Solve (\ref{HybridMultiUserBeam15}) with Algorithm~\ref{HybridMultiUserBeamA01} and $\lambda$, obtain $\bm{\Xi}^{\lambda}$, $L^{\lambda}$, and $\varrho^{\lambda}$.
\IF{$L^{\lambda}>L_{0}$}
\STATE $\lambda_{Temp}=\lambda$, $L_{Temp}=L^{\lambda}$.
\ELSE
\IF{$L^{\lambda}\notin\mathcal{L} $}
\STATE $\mathcal{L}=\mathcal{L} \cup \left\{\left(L^{\lambda},\lambda\right)\right\}$,  $L_{c}=L_{c}+1$.
\STATE Let $\mathcal{A}_{L^{\lambda}}$ be the set of the selected codewords indices obtained by $\bm{\Xi}^{\lambda}$, $\mathcal{A}
=\mathcal{A}\cup\left\{\mathcal{A}_{L^{\lambda}}\right\}$.
\ELSE
\STATE Sort the entries of $\mathcal{L}$ in ascending order with respect to $\lambda$.
\STATE Find the set of the indices $\mathcal{T}$ such that $\mathcal{T}=\left\{\iota : \mathcal{L}\left(\iota\right)\left(1,1\right)=L^{\lambda}\right\}$.
\IF{$\left|\mathcal{T}\right|==1$}
\STATE $\mathcal{I}=\mathcal{I}\cup\left\{\left[\min\left(\mathcal{L}\left(\mathcal{T}\left(1\right)\right)\left(1,2\right),\lambda\right),\atop
\max\left(\mathcal{L}\left(\mathcal{T}\left(1\right)\right)\left(1,2\right),\lambda\right)\right]\right\}$, $\mathcal{L}= \mathcal{L} \cup \left\{\left(L^{\lambda},\lambda\right)\right\}$.
\ELSIF{$\left|\mathcal{T}\right|==2$}
\STATE $\mathcal{I}=\mathcal{I}\setminus\left\{\left[\mathcal{L}\left(\mathcal{T}\left(1\right)\right)\left(1,2\right),
\mathcal{L}\left(\mathcal{T}\left(2\right)\right)\left(1,2\right)\right]\right\}$.
\STATE $\mathfrak{a}=\min\left(\mathcal{L}\left(\mathcal{T}\left(1\right)\right)\left(1,2\right),\lambda\right)$,
$\mathfrak{b}=\max\left(\mathcal{L}\left(\mathcal{T}\left(2\right)\right)\left(1,2\right),\lambda\right)$.
\STATE $\mathcal{I}=\mathcal{I}\cup\left\{\left[\mathfrak{a},\mathfrak{b}\right]\right\}$,
$\mathcal{L}\left(\mathcal{T}\left(1\right)\right)\left(1,2\right)=\mathfrak{a}$,
$\mathcal{L}\left(\mathcal{T}\left(2\right)\right)\left(1,2\right)=\mathfrak{b}$.
\ENDIF
\ENDIF
\ENDIF
\ENDIF
\ENDFOR
\ENDIF
\end{algorithmic}
\end{algorithm}

\setcounter{algorithm}{4}
\begin{algorithm}
\caption{RF Chains Set Generation: Part \uppercase\expandafter{\romannumeral3}}\label{TurningParaOptimization}
\begin{algorithmic}[1]
\IF{$L_{c}==L_{0}-L_{1}+1$}
\STATE $Flag=0$
\ELSE
\STATE Sort $\mathcal{L}$ in ascending order with respect to $\lambda$. 
\IF{$L_{0}>\mathcal{L}\left(1\right)\left(1,1\right)$}
\STATE $\lambda_{L}=\lambda_{Temp}$, $L^{\lambda_{L}}=L_{Temp}$,
\STATE $\lambda_{U}=\mathcal{L}\left(1\right)\left(1,2\right)$, $L^{\lambda_{U}}=\mathcal{L}\left(1\right)\left(1,1\right)$,
\ELSE
\STATE $Index=1$, $flag=1$;
\WHILE{$Index\leqslant \left|\mathcal{L}\right|-1 \&~flag$}
\IF{$\mathcal{L}\left(Index\right)\left(1,1\right)-\mathcal{L}\left(Index+1\right)\left(1,1\right)\geqslant 2$}
\STATE $\lambda_{L}=\mathcal{L}\left(Index\right)\left(1,2\right)$, $L^{\lambda_{L}}=\mathcal{L}\left(Index\right)\left(1,1\right)$.
\STATE $\lambda_{U}=\mathcal{L}\left(Index+1\right)\left(1,2\right)$.
\STATE $L^{\lambda_{U}}=\mathcal{L}\left(Index+1\right)\left(1,1\right)$, $flag=0$.
\ELSE
\STATE $Index=Index+1$,
\ENDIF
\ENDWHILE
\ENDIF
\ENDIF
\end{algorithmic}
\end{algorithm}

\subsection*{C. Refined Solution for EEmax problem}

Due to the introduction of the mixed $\ell_{1,\infty}$-norm squared for the selection of the RF chains in the previous subsections, the energy efficient beamforming vector cannot be directly extracted from the solution of~\eqref{HybridMultiUserBeam43}, i.e., $\left\{\bm{X}_{k}\right\}$. Therefore, we need to construct the reduced-size channel $\overline{\bm{h}}_{k}=\widehat{\bm{F}}^{H}\bm{h}_{k}$ according to the codewords selected by Algorithm~\ref{TurningParaOptimization}. Thus, the reduce-sized EEmax problem is given by
\begin{subequations}\label{HybridMultiUserBeam45}
\begin{align}
&\max_{\left\{\overline{\bm{g}}_{k}\right\}}~\frac{\sum\limits_{k=1}^{K} \overline{R}_{k}}
{\epsilon\sum\limits_{k=1}^{K}\left\|\widehat{\bm{F}}\overline{\bm{g}}_{k}\right\|_{2}^2+P_{dyn}^{*}},\label{HybridMultiUserBeam45a}\\
s.t.~&\overline{\mbox{SINR}}_{k}\geq \overline{\gamma}_{k}, \forall k\in\mathcal{K}, \sum\limits_{k=1}^{K}\left\|\widehat{\bm{F}}\overline{\bm{g}}_{k}\right\|_{2}^2\leq P,\label{HybridMultiUserBeam45b}
\end{align}
\end{subequations}
where $P_{dyn}^{\lambda_{L}^{key}}=L^{\lambda_{L}^{key}}\left(P_{RFC}+ M P_{PS}+P_{DAC}\right)+P_{sta}$. Problem~\eqref{HybridMultiUserBeam45} can be formulated as:
\begin{subequations}\label{HybridMultiUserBeam46}
\begin{align}
&\max_{\left\{\overline{\bm{g}}_{k},\overline{\alpha}_{k}, \overline{\beta}_{k}, \overline{\phi}_{k}\right\},\overline{\tau}, \overline{\chi}, \overline{\kappa}}~\overline{\chi},\label{HybridMultiUserBeam46a}\\
s.t.~&\sum\limits_{k=1}^{K}\overline{\beta}_{k}\geqslant\overline{\tau}^{2}, \frac{\overline{\tau}^{2}}{\overline{\kappa}}\geqslant\overline{\chi},\label{HybridMultiUserBeam46b}\\
&\epsilon\sum\limits_{k=1}^{K}\left\|\widehat{\bm{F}}\overline{\bm{g}}_{k}\right\|_{2}^2+P_{dyn}^{*}
\leqslant\overline{\kappa},\label{HybridMultiUserBeam46c}\\
&\eqref{HybridMultiUserBeam28b}, \eqref{HybridMultiUserBeam28c}, \eqref{HybridMultiUserBeam28d}.\label{HybridMultiUserBeam46d}
\end{align}
\end{subequations}
Similarly, instead of directly solving~(\ref{HybridMultiUserBeam46}), we resort to solving the following convex approximated problem
\begin{subequations}\label{HybridMultiUserBeam47}
\begin{align}
&\max_{\left\{\overline{\bm{g}}_{k},\overline{\alpha}_{k}, \overline{\beta}_{k}, \overline{\phi}_{k}\right\},\overline{\tau}, \overline{\chi}, \overline{\kappa}}~\overline{\chi},\label{HybridMultiUserBeam47a}\\
s.t.~&\sum\limits_{k=1}^{K}\overline{\beta}_{k}\geqslant\overline{\tau}^{2},
\overline{\Psi}^{\left(I\right)}\left(\overline{\tau},\overline{\kappa}\right)\geqslant\overline{\chi},
\eqref{HybridMultiUserBeam46c}, \eqref{HybridMultiUserBeam46d},\label{HybridMultiUserBeam47b}
\end{align}
\end{subequations}
where $I$ denotes the $I$th iteration, and $\overline{\Psi}^{\left(I\right)}\left(\overline{\tau},\overline{\kappa}\right)$ is given by
\begin{equation}\label{HybridMultiUserBeam48}
\overline{\Psi}^{\left(I\right)}\left(\overline{\tau},\overline{\kappa}\right)\triangleq
2\frac{\overline{\tau}^{\left(I\right)}}{\overline{\kappa}^{\left(I\right)}}\overline{\tau}
-\left(\frac{\overline{\tau}^{\left(I\right)}}{\overline{\kappa}^{\left(I\right)}}\right)^{2}\overline{\kappa}
\end{equation}
Thus, problem~(\ref{HybridMultiUserBeam47}) can be solved in a similar manner as described in Algorithm~\ref{HybridMultiUserBeamA03}.

\section*{\sc \uppercase\expandafter{\romannumeral6}. Numerical Results}

In this section, we present numerical results to demonstrate the performance of our developed RF-baseband hybrid precoding design. A uniform linear array with antenna spacing equal to a half wavelength is adopted, and the RF phase shifters use quantized phases. The predesigned codebook $\bm{F}$ is the DFT codebook. The propagation environment is modeled as $N_{cl}=6$ with $N_{ray}=8$ for each cluster with Laplacian distributed angles of departure. For simplicity, we assume that all clusters are of equal power, i.e., $\sigma_{\alpha,m_{p}}^{2}=\sigma_{\alpha}^{2}, \forall m_{p}$~\cite{TWCEl2014}. The mean cluster angle of $\phi_{m_{p}}$ is uniformly distributed over $\left[-\pi, \pi\right)$, and the constant angular spread of AoD $\sigma_{\phi}$ is $7.5^{o}$. $P_{RFC}=43$ mW, $P_{PA}= 20$ mW, $P_{DAC}= 200$ mW, $P_{PS} = 30$ mW, $P_{mixer}= 19$ mW, $P_{BB}=300$ mW, and $P_{cool}=200$ mW~\cite{IJSSCYu2010,TMTTLi2013}. The noise power spectrum density is $\sigma^{2}=1$. For fairness, all simulated precoding designs use the same total power constraint and the signal-to-noise ratio is defined as $\rm{SNR}=10\log_{10}\left(\frac{P}{\sigma^{2}}\right)$. The inefficiency factor of power amplifier $\epsilon$ is set to unit and the stop threshold  is $\zeta=10^{-3}$.
In all simulation figures, the simulated EE of the system is given by
\begin{equation}\label{HybridMultiUserBeam49}
\frac{\sum\limits_{k=1}^{K} \log_{2}\left(1+\frac{\left\|\bm{h}_{k}^{H}\widehat{\bm{F}}\overline{\bm{g}}_{k}\right\|_{2}^{2}}{\sum\limits_{l=1,l\neq k}^{K}\left\|\bm{h}_{k}^{H}\widehat{\bm{F}}\overline{\bm{g}}_{l}\right\|_{2}^{2}+\sigma^{2}}\right)}
{\epsilon\sum\limits_{k=1}^{K}\left\|\widehat{\bm{F}}\overline{\bm{g}}_{k}\right\|_{2}^2+P_{dyn}^{*}}.
\end{equation}
We compare the performance of the proposed strategy to the optimal fully digital precoder with one RF chain per antenna, whose EE is calculated as
\begin{equation}\label{HybridMultiUserBeam50}
\frac{\sum\limits_{k=1}^{K} \log_{2}\left(1+\frac{\left\|\bm{h}_{k}^{H}\overline{\bm{g}}_{k}\right\|_{2}^{2}}{\sum\limits_{l=1,l\neq k}^{K}\left\|\bm{h}_{k}^{H}\overline{\bm{g}}_{l}\right\|_{2}^{2}+\sigma^{2}}\right)}
{\epsilon\sum\limits_{k=1}^{K}\left\|\overline{\bm{g}}_{k}\right\|_{2}^2+M\left(P_{RFC}+P_{DAC}+P_{PA}\right)+P_{BB}+P_{cool}}.
\end{equation}

In our simulation scenario, fully digital precoding denotes using Algorithm~3 to solve the SRmax problem, where each antenna connects with an independent RF channel at the BS. Fully analog beamforming is achieved by selecting the best codeword from the codebook via beam training and setting the baseband precoder as an identity matrix with uniform power allocation between users. OMP SRmax Hybrid Precoding uses the orthogonal matching pursuit method~\cite{TWCEl2014} to obtain the RF-baseband precoders based on the solution of the fully digital SRmax problem.

Fig.~\ref{HybridSumRateComparisons} and Fig.~\ref{HybridSumRateComparisonsOMP} show the SR performance of various hybrid precoding designs as well as the fully digital precoder. The results are obtained by averaging over $1000$ random channel realizations. The target rate of the $k$th user is set to be zero. Numerical results show that the proposed hybrid precoding design achieves the highest SR among several hybrid RF-baseband precoders. This is because the proposed hybrid RF-baseband precoding design provides a more flexible way to achieve the beam diversity gain. One can see that the fully analog beamforming method results in the worst SR performance, indicating that the inter-user interference cannot be effectively suppressed.
\begin{figure}[h]
\centering
\captionstyle{flushleft}
\onelinecaptionstrue
\includegraphics[width=0.8\columnwidth,keepaspectratio]{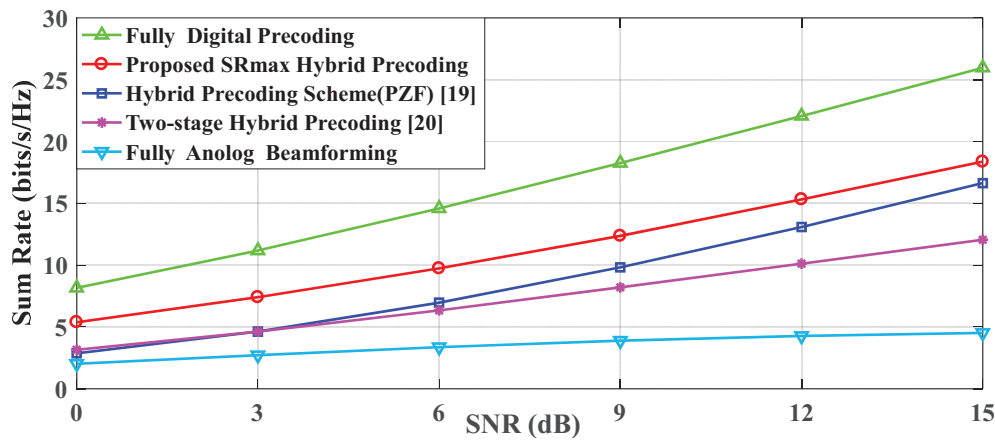}\\
\caption{Sum rate comparison of various hybrid precoders and fully digital precoder, $M=N=16$, $S=4$, $K=4$.}
\label{HybridSumRateComparisons}
\end{figure}

\begin{figure}[th]
\centering
\subfigure[$M=N=16$, $S=4$, $K=2$]{
\label{HybridSumRateComparisonsOMPa} 
\includegraphics[width=0.8\columnwidth]{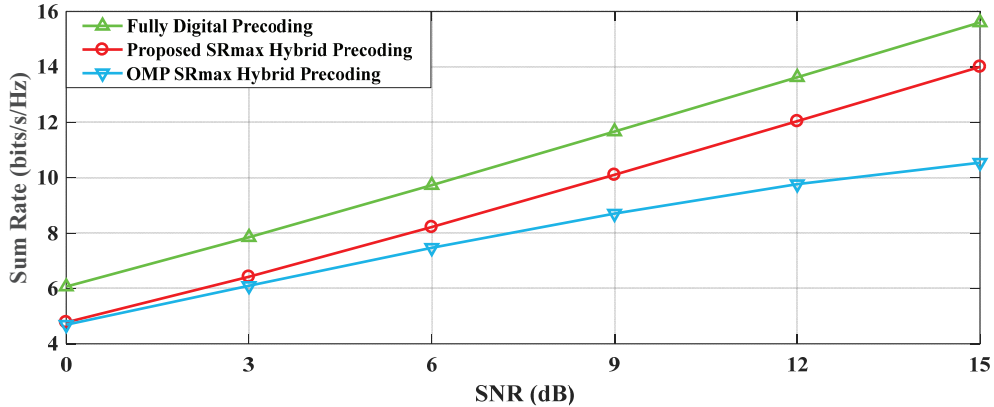}}
\subfigure[$M=N=32$, $S=4$, $K=2$]{
\label{HybridSumRateComparisonsOMPb} 
\includegraphics[width=0.8\columnwidth]{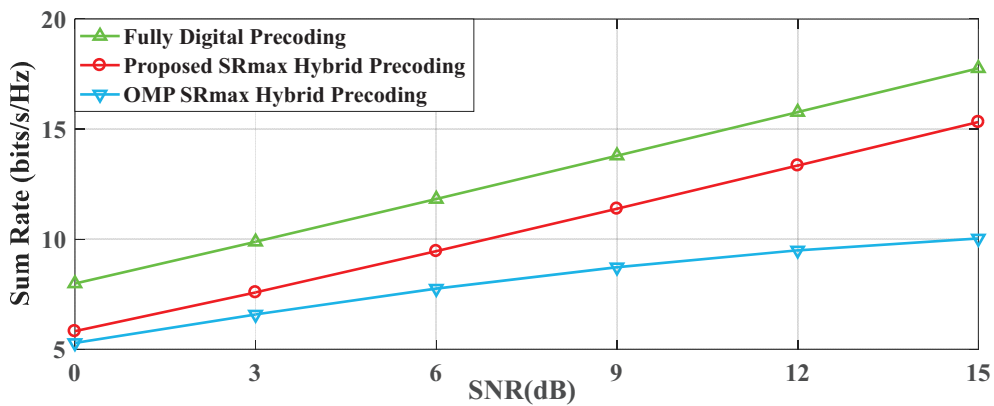}}
\caption{Sum rate comparison of various hybrid precoders and fully digital precoder.}
\label{HybridSumRateComparisonsOMP}
\end{figure}

Fig.~\ref{Chi&LambdaComparison} illustrates the change of the objective $\chi$ of \eqref{HybridMultiUserBeam40} versus an increasing $\lambda$ for two random channel realizations. The target rate of the $k$th user is set to be the rate achieved by randomly selecting $S$ analog codeword from codebook $\bm{F}$ and using the baseband precoder as $\underline{\bm{G}}=\frac{P}{K}\upsilon_{max}^{\left(K\right)}\left(\bm{H}\underline{\bm{F}}\right)$. Simulation results show that there exist indeed breaking points of $\lambda$, i.e., the number of selected RF chains keep unchanged within a range of $\lambda$ but suddenly changes at some points. One can observe that the number of selected RF chains decreases with an increasing value of $\lambda$. Within a certain interval of $\lambda$, the number of selected RF chains keep the same but the objective $\chi$ of \eqref{HybridMultiUserBeam40} decreases when $\lambda$ increases. Table~\ref{Chi&LambdaComparisonList} lists the set of the indices of the selected RF chains corresponding to the channel realization used in Fig.~\ref{Chi&LambdaComparison}. One can observe that the same set of the RF chains (or codewords) are selected for any $\lambda\in\left[\lambda_{L+1}^{key},\lambda_{L}^{key}\right)$. This observation is consistent with the result in Theorem~\ref{HybridMultiUserBeamL03}, which has been used in Algorithm~\ref{TurningParaOptimization}.
\begin{figure}[th]
\centering
\subfigure[Channel Realization I]{
\label{ChiLambdaComparisona} 
\includegraphics[width=0.8\columnwidth]{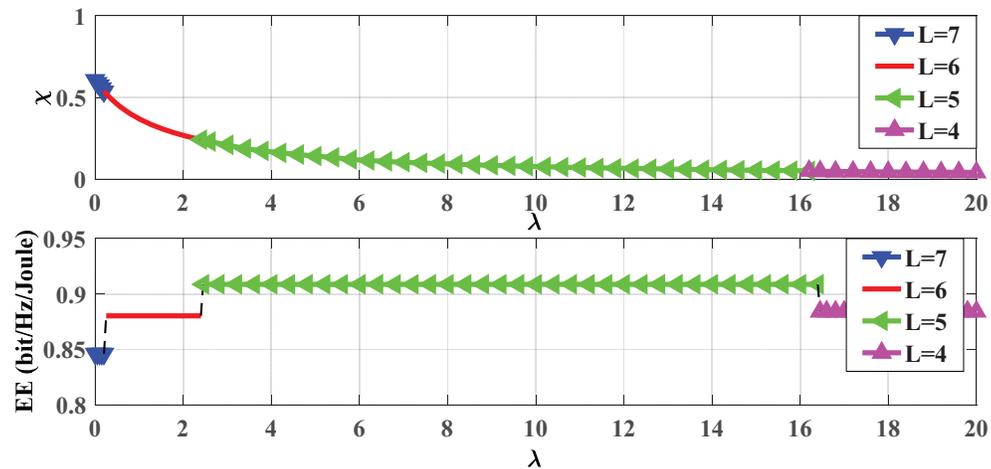}}
\subfigure[Channel Realization II]{
\label{ChiLambdaComparisonb} 
\includegraphics[width=0.8\columnwidth]{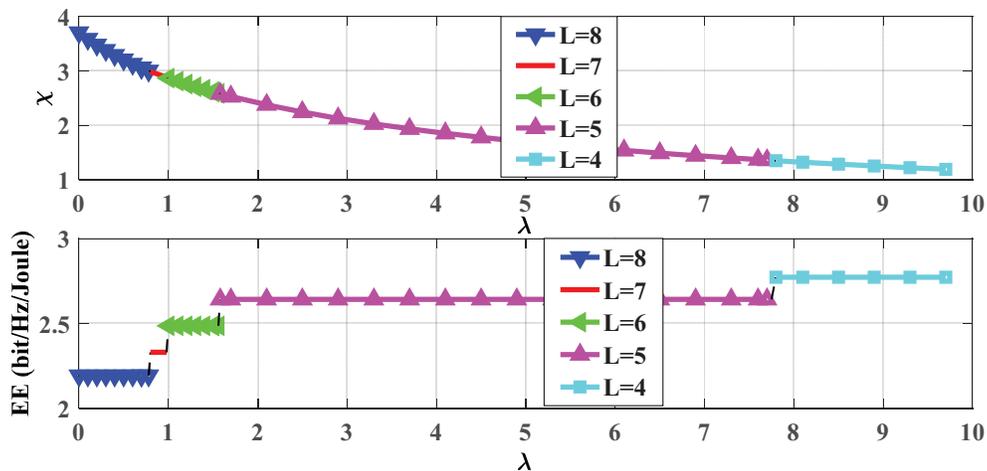}}
\caption{$\chi$ versus value of $\lambda$, $M=N=16$, $S=8$, $K=4$, $\mbox{SNR}=15$dB.}
\label{Chi&LambdaComparison} 
\end{figure}

\begin{table*}[htbp]
\setlength{\abovecaptionskip}{0pt}
\setlength{\belowcaptionskip}{5pt}
\captionstyle{flushleft}
\onelinecaptionstrue
\centering
\caption{Examples of $\lambda$, $\mathcal{A}_{L}$, and $L$ with $M=N=16$, $S=8$, $K=4$, $\mbox{SNR}=15$dB.}
\begin{tabular}{|c|c|c|c|c|c|c|c|c|}
\hline
$\lambda$ &$\mathcal{A}_{L}$&$L$& EE (bits/Hz/Joule) &&$\lambda$&$\mathcal{A}_{L}$&$L$&EE (bits/Hz/Joule)\\
\hline
$0$& $3,4,5,6,7,8,16$& $7$&$0.8460$&&$2.45$&$4,5,6,8,16$&$5$&$0.9087$\\
\hline
$0.2$& $3,4,5,6,7,8,16$& $7$&$0.8460$&&$16.35$&$4,5,6,8,16$&$5$&$0.9087$ \\
\hline
$.25$& $4,5,6,7,8,16$& $6$&$0.8803$&&$16.4$&$4,5,6,16$&$4$&$0.8841$\\
\hline
$2.4$& $4,5,6,7,8,16$& $6$&$0.8803$&&$20$&$4,5,6,16$&$4$& $0.8841$\\
\hline
\hline
\hline
$\lambda$ & $\mathcal{A}_{L}$ &$L$&EE  (bits/Hz/Joule)&&$\lambda$& $\mathcal{A}_{L}$&$L$&EE (bits/Hz/Joule)\\
\hline
$0$& $1,2,3,9,10,14,15,16$& $8$&$2.1931$&&$1.56$&$1,2,3,10,15,16$&$6$&$2.4861$\\
\hline
$0.4$& $1,2,3,9,10,14,15,16$& $8$&$2.1931$&&$1.58$&$1,2,3,10,15$&$5$&$2.6415$ \\
\hline
$0.78$& $1,2,3,9,10,14,15,16$& $8$&$2.1931$&&$5.9$&$1,2,3,10,15$&$5$&$2.6415$\\
\hline
$0.8$& $1,2,3,10,14,15,16$& $7$&$2.3299$&&$7.75$&$1,2,3,10,15$&$5$& $2.6415$\\
\hline
$0.9$& $1,2,3,10,14,15,16$& $7$&$2.3299$&&$7.8$&$1,3,10,15$&$4$&$2.7720$\\
\hline
$0.98$& $1,2,3,10,14,15,16$& $7$&$2.3299$&&$8.8$&$1,3,10,15$&$4$&$2.7720$ \\
\hline
$1$& $1,2,3,10,15,16$& $6$&$2.4861$&&$9.8$&$1,3,10,15$&$4$&$2.7720$\\
\hline
$1.3$& $1,2,3,10,15,16$& $6$&$2.4861$&&$-$&$-$&$-$&$-$ \\
\hline
\end{tabular}
\label{Chi&LambdaComparisonList}
\end{table*}

Fig.~\ref{HybridDigitalEESRComparison32a} and Fig.~\ref{HybridDigitalEESRComparison32b} illustrates the SR and EE of various mmWave precoding designs and the fully digital precoding design, respectively. The results are obtained by averaging over $1000$ random channel realizations. The target rate of $k$th user is set to be the rate achieved by selecting randomly $S$ analog codeword from codebook $\bm{F}$ and defining the baseband precoder as $\underline{\bm{G}}=\frac{P}{K}\upsilon_{max}^{\left(K\right)}\left(\bm{H}\underline{\bm{F}}\right)$. It is observed that using the DFT codebook in the proposed EEmax precoder is better than using the 802.15.3c codebook in terms of the EE performance, while the two codebooks lead to the similar SR performance. Compared to the fully digital precoder, the proposed SRmax/EEmax hybrid precoders have certain system SR performance loss as the RF-baseband hybrid architecture may not fully exploit the multi-path diversity gain. The circuit power consumption of the hybrid architecture increases with the number of phase shifter and the number of mixers, which are determined by the number of transmit antennas and the number of RF chains. Therefore, the system EE performance of the hybrid precoding is also determined by the number of transmit antennas and the number of RF chains. One can see that different configuration of the number of antennas, RF chains, and users leads to different system EE performance. For example, for the configuration of $M=N=32$, $S=4$, and $K=2$, the EE performance of the hybrid precoders is better than that of the fully digital precoder. Besides, the fully digital precoder leads to a much higher hardware cost ($M=32$ RF chains versus $S=8$ RF chains), which is critical for mmWave communication systems using GHz bandwidth and even higher sampling rates.

\begin{figure}[h]
\centering
\captionstyle{flushleft}
\onelinecaptionstrue
\includegraphics[width=0.8\columnwidth,keepaspectratio]{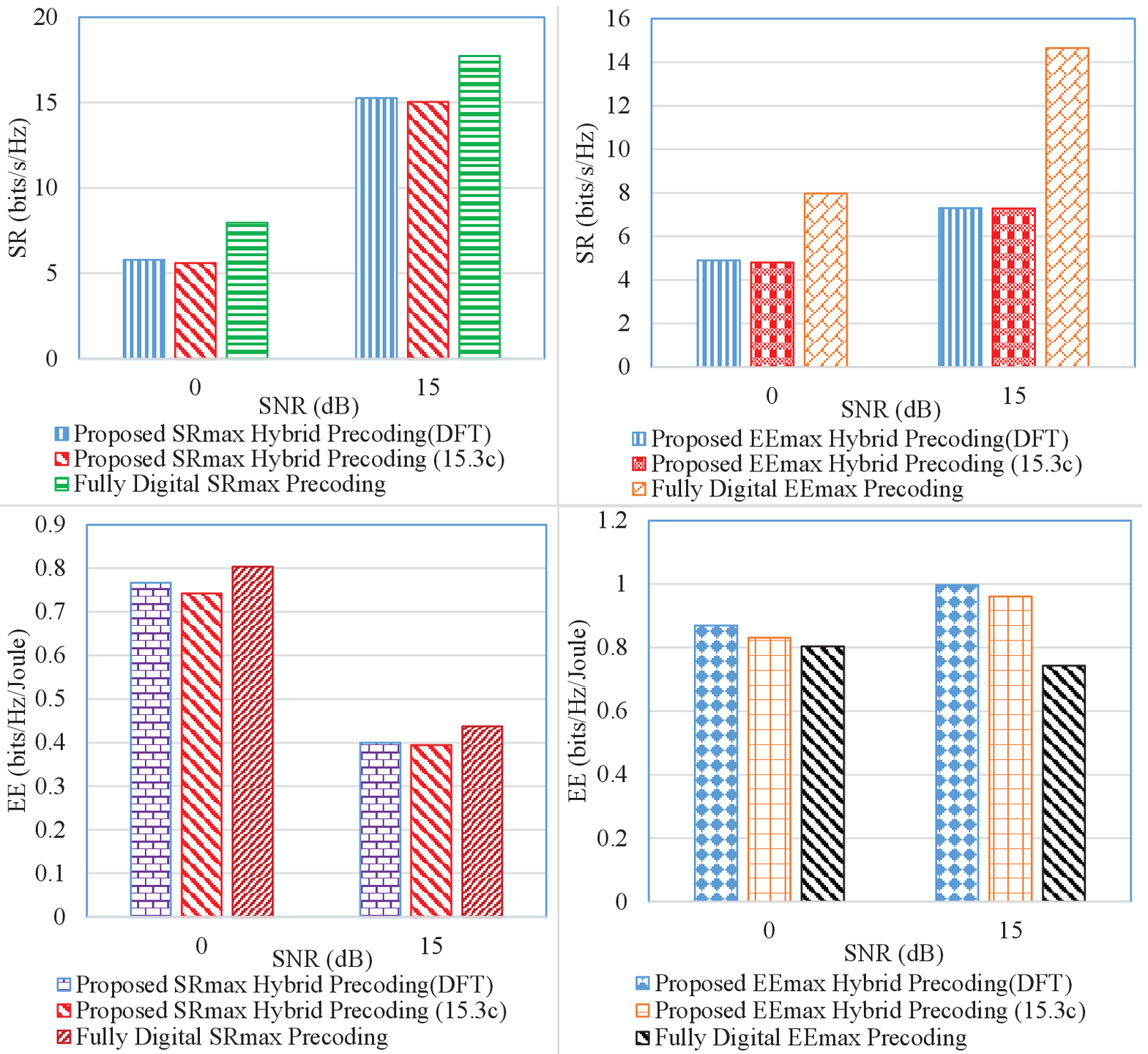}\\
\caption{EE/SR comparison of the hybrid precoders and fully digital precoder, $M=N=32$, $S=4$, $K=2$.}
\label{HybridDigitalEESRComparison32a}
\end{figure}

\begin{figure}[h]
\centering
\captionstyle{flushleft}
\onelinecaptionstrue
\includegraphics[width=0.8\columnwidth,keepaspectratio]{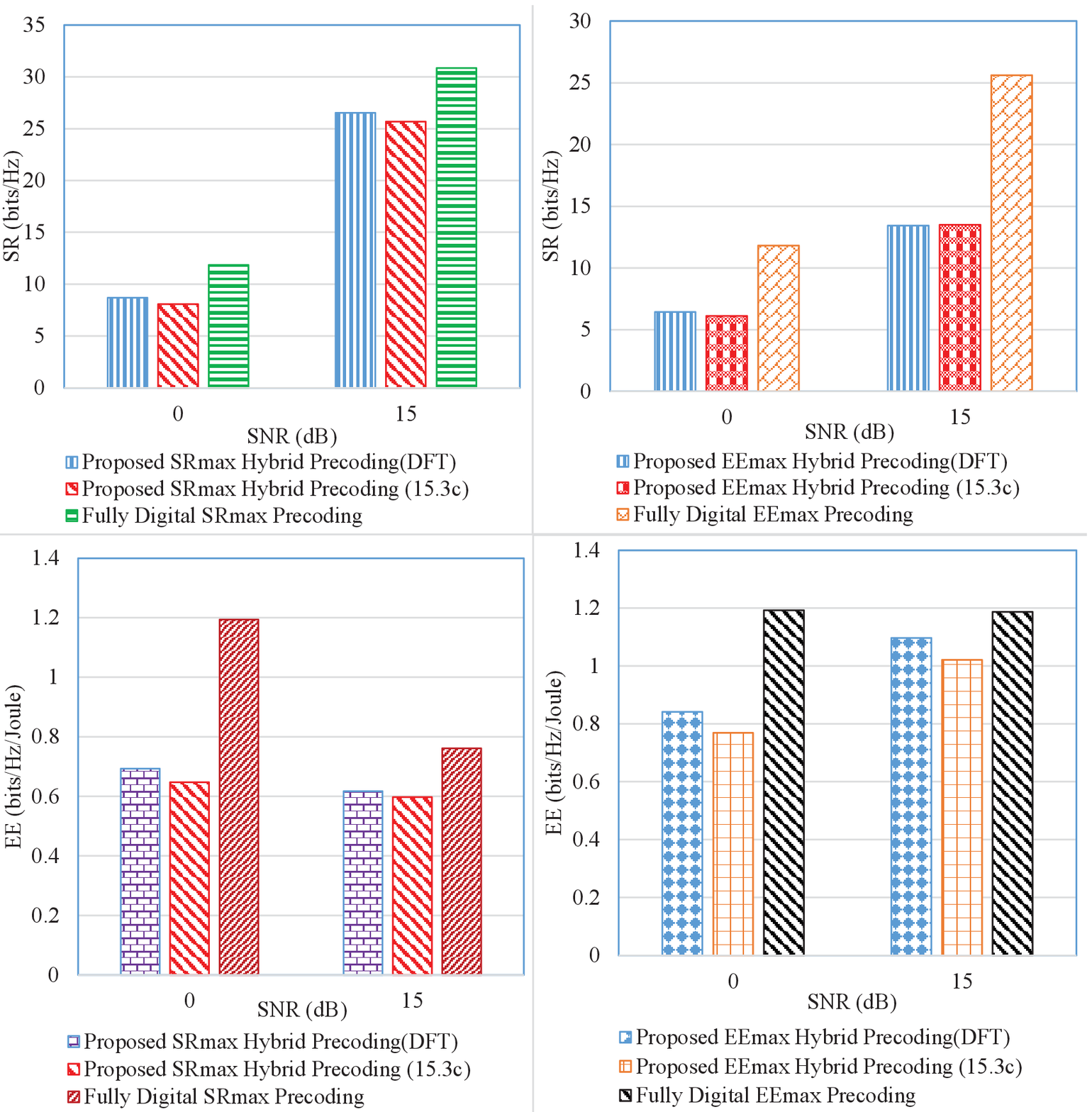}\\
\caption{EE/SR comparison of the hybrid precoders and fully digital precoder, $M=N=32$, $S=8$, $K=4$.}
\label{HybridDigitalEESRComparison32b}
\end{figure}

\section*{\sc \uppercase\expandafter{\romannumeral7}. Conclusions}

In this paper, we considered the design of the hybrid RF-baseband precoding for the downlink of multiuser multi-antenna systems with the aim to maximize the system SR and the system EE. We developed a codebook based RF precoding method and obtained the channel state information via a beam sweep procedure. Exploiting the codebook based design, we simplified the complicated hybrid precoders optimization problems to JWSPD problems. Then, efficient methods were developed to address the JWSPD problems for maximizing the SR and EE of the system. Finally, extensive numerical simulation results are provided to validate the effectiveness of the proposed hybrid precoding design.
%

\begin{small}

\end{small}

\end{document}